\newtheorem{theorem}{Theorem}
\newtheorem{axiom}[theorem]{Axiom}
\newtheorem{claim}[theorem]{Claim}
\newtheorem{conclusion}[theorem]{Conclusion}
\newtheorem{conjecture}[theorem]{Conjecture}
\newtheorem{corollary}[theorem]{Corollary}
\newtheorem{definition}[theorem]{Definition}
\newtheorem{example}[theorem]{Example}
\newtheorem{exercise}[theorem]{Exercise}
\newtheorem{lemma}[theorem]{Lemma}
\newtheorem{proposition}[theorem]{Proposition}
\newtheorem{remark}[theorem]{Remark}
\newtheorem{partial solution}[theorem]{Partial Solution}
\newenvironment{proof}[1][Proof]{\textbf{#1.} }{\ \rule{0.5em}{0.5em}}
\chardef\@x10\chardef\@xv60
\def\tcitime{
\def\@time{%
  \@minute\time\@hour\@minute\divide\@hour\@xv
  \ifnum\@hour<\@x 0\fi\the\@hour:%
  \multiply\@hour\@xv\advance\@minute-\@hour
  \ifnum\@minute<\@x 0\fi\the\@minute
  }}%
\def\QCTOpt[#1]#2{%
  \def\QCTOptB{#1}
  \def\QCTOptA{#2}
}
\def\QCTNOpt#1{%
  \def\QCTOptA{#1}
  \let\QCTOptB\empty
}
\def\Qct{%
  \@ifnextchar[{%
    \QCTOpt}{\QCTNOpt}
}
\def\QCBOpt[#1]#2{%
  \def\QCBOptB{#1}
  \def\QCBOptA{#2}
}
\def\QCBNOpt#1{%
  \def\QCBOptA{#1}
  \let\QCBOptB\empty
}
\def\Qcb{%
  \@ifnextchar[{%
    \QCBOpt}{\QCBNOpt}
}
\def\PrepCapArgs{%
  \ifx\QCBOptA\empty
    \ifx\QCTOptA\empty
      {}%
    \else
      \ifx\QCTOptB\empty
        {\QCTOptA}%
      \else
        [\QCTOptB]{\QCTOptA}%
      \fi
    \fi
  \else
    \ifx\QCBOptA\empty
      {}%
    \else
      \ifx\QCBOptB\empty
        {\QCBOptA}%
      \else
        [\QCBOptB]{\QCBOptA}%
      \fi
    \fi
  \fi
}
\def\GRAPHICSPS#1{%
 \ifcase\GRAPHICSTYPE
   \special{ps: #1}%
 \or
   \special{language "PS", include "#1"}%
 \fi
}%
\def\graffile#1#2#3#4{%
    \bgroup
    \leavevmode
    \@ifundefined{bbl@deactivate}{\def~{\string~}}{\activesoff}
    \raise -#4 \BOXTHEFRAME{%
        \hbox to #2{\raise #3\hbox to #2{\null #1\hfil}}}%
    \egroup
}%
\def\draftbox#1#2#3#4{%
 \leavevmode\raise -#4 \hbox{%
  \frame{\rlap{\protect\tiny #1}\hbox to #2%
   {\vrule height#3 width\z@ depth\z@\hfil}%
  }%
 }%
}%
\newif\ifwasdraft
\def\GRAPHIC#1#2#3#4#5{%
 \ifnum\draft=\@ne\draftbox{#2}{#3}{#4}{#5}%
  \else\graffile{#1}{#3}{#4}{#5}%
  \fi
 }%
\def\addtoLaTeXparams#1{%
    \edef\LaTeXparams{\LaTeXparams #1}}%
\newif\ifBoxFrame \BoxFramefalse
\newif\ifOverFrame \OverFramefalse
\newif\ifUnderFrame \UnderFramefalse
\def\BOXTHEFRAME#1{%
   \hbox{%
      \ifBoxFrame
         \frame{#1}%
      \else
         {#1}%
      \fi
   }%
}
\def\doFRAMEparams#1{\BoxFramefalse\OverFramefalse\UnderFramefalse\readFRAMEparams#1\end}%
\def\readFRAMEparams#1{%
 \ifx#1\end%
  \let\next=\relax
  \else
  \ifx#1i\dispkind=\z@\fi
  \ifx#1d\dispkind=\@ne\fi
  \ifx#1f\dispkind=\tw@\fi
  \ifx#1t\addtoLaTeXparams{t}\fi
  \ifx#1b\addtoLaTeXparams{b}\fi
  \ifx#1p\addtoLaTeXparams{p}\fi
  \ifx#1h\addtoLaTeXparams{h}\fi
  \ifx#1X\BoxFrametrue\fi
  \ifx#1O\OverFrametrue\fi
  \ifx#1U\UnderFrametrue\fi
  \ifx#1w
    \ifnum\draft=1\wasdrafttrue\else\wasdraftfalse\fi
    \draft=\@ne
  \fi
  \let\next=\readFRAMEparams
  \fi
 \next
 }%
\def\IFRAME#1#2#3#4#5#6{%
      \bgroup
      \let\QCTOptA\empty
      \let\QCTOptB\empty
      \let\QCBOptA\empty
      \let\QCBOptB\empty
      #6%
      \parindent=0pt%
      \leftskip=0pt
      \rightskip=0pt
      \setbox0 = \hbox{\QCBOptA}%
      \@tempdima = #1\relax
      \ifOverFrame
          \typeout{This is not implemented yet}%
          \show\HELP
      \else
         \ifdim\wd0>\@tempdima
            \advance\@tempdima by \@tempdima
            \ifdim\wd0 >\@tempdima
               \textwidth=\@tempdima
               \setbox1 =\vbox{%
                  \noindent\hbox to \@tempdima{\hfill\GRAPHIC{#5}{#4}{#1}{#2}{#3}\hfill}\\%
                  \noindent\hbox to \@tempdima{\parbox[b]{\@tempdima}{\QCBOptA}}%
               }%
               \wd1=\@tempdima
            \else
               \textwidth=\wd0
               \setbox1 =\vbox{%
                 \noindent\hbox to \wd0{\hfill\GRAPHIC{#5}{#4}{#1}{#2}{#3}\hfill}\\%
                 \noindent\hbox{\QCBOptA}%
               }%
               \wd1=\wd0
            \fi
         \else
            \ifdim\wd0>0pt
              \hsize=\@tempdima
              \setbox1 =\vbox{%
                \unskip\GRAPHIC{#5}{#4}{#1}{#2}{0pt}%
                \break
                \unskip\hbox to \@tempdima{\hfill \QCBOptA\hfill}%
              }%
              \wd1=\@tempdima
           \else
              \hsize=\@tempdima
              \setbox1 =\vbox{%
                \unskip\GRAPHIC{#5}{#4}{#1}{#2}{0pt}%
              }%
              \wd1=\@tempdima
           \fi
         \fi
         \@tempdimb=\ht1
         \advance\@tempdimb by \dp1
         \advance\@tempdimb by -#2%
         \advance\@tempdimb by #3%
         \leavevmode
         \raise -\@tempdimb \hbox{\box1}%
      \fi
      \egroup%
}%
\def\DFRAME#1#2#3#4#5{%
 \begin{center}
     \let\QCTOptA\empty
     \let\QCTOptB\empty
     \let\QCBOptA\empty
     \let\QCBOptB\empty
     \ifOverFrame 
        #5\QCTOptA\par
     \fi
     \GRAPHIC{#4}{#3}{#1}{#2}{\z@}
     \ifUnderFrame 
        \nobreak\par\nobreak#5\QCBOptA
     \fi
 \end{center}%
 }%
\def\FFRAME#1#2#3#4#5#6#7{%
 \begin{figure}[#1]%
  \let\QCTOptA\empty
  \let\QCTOptB\empty
  \let\QCBOptA\empty
  \let\QCBOptB\empty
  \ifOverFrame
    #4
    \ifx\QCTOptA\empty
    \else
      \ifx\QCTOptB\empty
        \caption{\QCTOptA}%
      \else
        \caption[\QCTOptB]{\QCTOptA}%
      \fi
    \fi
    \ifUnderFrame\else
      \label{#5}%
    \fi
  \else
    \UnderFrametrue%
  \fi
  \begin{center}\GRAPHIC{#7}{#6}{#2}{#3}{\z@}\end{center}%
  \ifUnderFrame
    #4
    \ifx\QCBOptA\empty
      \caption{}%
    \else
      \ifx\QCBOptB\empty
        \caption{\QCBOptA}%
      \else
        \caption[\QCBOptB]{\QCBOptA}%
      \fi
    \fi
    \label{#5}%
  \fi
  \end{figure}%
 }%
\def\makeactives{
  \catcode`\"=\active
  \catcode`\;=\active
  \catcode`\:=\active
  \catcode`\'=\active
  \catcode`\~=\active
}
   \gdef\activesoff{%
      \def"{\string"}
      \def;{\string;}
      \def:{\string:}
      \def'{\string'}
      \def~{\string~}
    }
\def\FRAME#1#2#3#4#5#6#7#8{%
 \bgroup
 \ifnum\draft=\@ne
   \wasdrafttrue
 \else
   \wasdraftfalse%
 \fi
 \def\LaTeXparams{}%
 \dispkind=\z@
 \def\LaTeXparams{}%
 \doFRAMEparams{#1}%
 \ifnum\dispkind=\z@\IFRAME{#2}{#3}{#4}{#7}{#8}{#5}\else
  \ifnum\dispkind=\@ne\DFRAME{#2}{#3}{#7}{#8}{#5}\else
   \ifnum\dispkind=\tw@
    \edef\@tempa{\noexpand\FFRAME{\LaTeXparams}}%
    \@tempa{#2}{#3}{#5}{#6}{#7}{#8}%
    \fi
   \fi
  \fi
  \ifwasdraft\draft=1\else\draft=0\fi{}%
  \egroup
 }%
\def\TEXUX#1{"texux"}
\long\def\QQQ#1#2{%
     \long\expandafter\def\csname#1\endcsname{#2}}%
\long\def\QQA#1#2{}%
\def\QTR#1#2{{\csname#1\endcsname #2}}
\def\EXPAND#1[#2]#3{}%
\def\NOEXPAND#1[#2]#3{}%
\def\LaTeXparent#1{}%
\def\ChildStyles#1{}%
\def\ChildDefaults#1{}%
\def\QTagDef#1#2#3{}%
  \providecommand{\UNICODE}[2][]{}
\def\QQfnmark#1{\footnotemark}
 \def\abstract{%
  \if@twocolumn
   \section*{Abstract (Not appropriate in this style!)}%
   \else \small 
   \begin{center}{\bf Abstract\vspace{-.5em}\vspace{\z@}}\end{center}%
   \quotation 
   \fi
  }%
   \def\registered{\relax\ifmmode{}\r@gistered
                    \else$\m@th\r@gistered$\fi}%
 \def\r@gistered{^{\ooalign
  {\hfil\raise.07ex\hbox{$\scriptstyle\rm\text{R}$}\hfil\crcr
  \mathhexbox20D}}}}{}%
\newdimen\theight
\def\Column{%
 \vadjust{\setbox\z@=\hbox{\scriptsize\quad\quad tcol}%
  \theight=\ht\z@\advance\theight by \dp\z@\advance\theight by \lineskip
  \kern -\theight \vbox to \theight{%
   \rightline{\rlap{\box\z@}}%
   \vss
   }%
  }%
 }%
\def\qed{%
 \ifhmode\unskip\nobreak\fi\ifmmode\ifinner\else\hskip5\p@\fi\fi
 \hbox{\hskip5\p@\vrule width4\p@ height6\p@ depth1.5\p@\hskip\p@}%
 }%
\def\miss{\hbox{\vrule height2\p@ width 2\p@ depth\z@}}%
\def\tcol#1{{\baselineskip=6\p@ \vcenter{#1}} \Column}  %
\def\newfmtname{LaTeX2e}
  \DeclareOldFontCommand{\rm}{\normalfont\rmfamily}{\mathrm}
  \DeclareOldFontCommand{\sf}{\normalfont\sffamily}{\mathsf}
  \DeclareOldFontCommand{\tt}{\normalfont\ttfamily}{\mathtt}
  \DeclareOldFontCommand{\bf}{\normalfont\bfseries}{\mathbf}
  \DeclareOldFontCommand{\it}{\normalfont\itshape}{\mathit}
  \DeclareOldFontCommand{\sl}{\normalfont\slshape}{\@nomath\sl}
  \DeclareOldFontCommand{\sc}{\normalfont\scshape}{\@nomath\sc}
\def\alpha{{\Greekmath 010B}}%
\def\beta{{\Greekmath 010C}}%
\def\gamma{{\Greekmath 010D}}%
\def\delta{{\Greekmath 010E}}%
\def\epsilon{{\Greekmath 010F}}%
\def\zeta{{\Greekmath 0110}}%
\def\eta{{\Greekmath 0111}}%
\def\theta{{\Greekmath 0112}}%
\def\iota{{\Greekmath 0113}}%
\def\kappa{{\Greekmath 0114}}%
\def\lambda{{\Greekmath 0115}}%
\def\mu{{\Greekmath 0116}}%
\def\nu{{\Greekmath 0117}}%
\def\xi{{\Greekmath 0118}}%
\def\pi{{\Greekmath 0119}}%
\def\rho{{\Greekmath 011A}}%
\def\sigma{{\Greekmath 011B}}%
\def\tau{{\Greekmath 011C}}%
\def\upsilon{{\Greekmath 011D}}%
\def\phi{{\Greekmath 011E}}%
\def\chi{{\Greekmath 011F}}%
\def\psi{{\Greekmath 0120}}%
\def\omega{{\Greekmath 0121}}%
\def\varepsilon{{\Greekmath 0122}}%
\def\vartheta{{\Greekmath 0123}}%
\def\varpi{{\Greekmath 0124}}%
\def\varrho{{\Greekmath 0125}}%
\def\varsigma{{\Greekmath 0126}}%
\def\varphi{{\Greekmath 0127}}%
\def\nabla{{\Greekmath 0272}}
\def\FindBoldGroup{%
   {\setbox0=\hbox{$\mathbf{x\global\edef\theboldgroup{\the\mathgroup}}$}}%
}
\def\Greekmath#1#2#3#4{%
    \if@compatibility
        \ifnum\mathgroup=\symbold
           \mathchoice{\mbox{\boldmath$\displaystyle\mathchar"#1#2#3#4$}}%
                      {\mbox{\boldmath$\textstyle\mathchar"#1#2#3#4$}}%
                      {\mbox{\boldmath$\scriptstyle\mathchar"#1#2#3#4$}}%
                      {\mbox{\boldmath$\scriptscriptstyle\mathchar"#1#2#3#4$}}%
        \else
           \mathchar"#1#2#3#4%
        \fi 
    \else 
        \FindBoldGroup
        \ifnum\mathgroup=\theboldgroup 
           \mathchoice{\mbox{\boldmath$\displaystyle\mathchar"#1#2#3#4$}}%
                      {\mbox{\boldmath$\textstyle\mathchar"#1#2#3#4$}}%
                      {\mbox{\boldmath$\scriptstyle\mathchar"#1#2#3#4$}}%
                      {\mbox{\boldmath$\scriptscriptstyle\mathchar"#1#2#3#4$}}%
        \else
           \mathchar"#1#2#3#4%
        \fi     	    
	  \fi}
\newif\ifGreekBold  \GreekBoldfalse
\let\SAVEPBF=\pbf
\def\pbf{\GreekBoldtrue\SAVEPBF}%
  \newcounter{equationnumber}  
  \def\mathletters{%
     \addtocounter{equation}{1}
     \edef\@currentlabel{\theequation}%
     \setcounter{equationnumber}{\c@equation}
     \setcounter{equation}{0}%
     \edef\theequation{\@currentlabel\noexpand\alph{equation}}%
  }
    \def\BibTeX{{\rm B\kern-.05em{\sc i\kern-.025em b}\kern-.08em
                 T\kern-.1667em\lower.7ex\hbox{E}\kern-.125emX}}}{}%
\def\AmS{{\protect\usefont{OMS}{cmsy}{m}{n}%
                A\kern-.1667em\lower.5ex\hbox{M}\kern-.125emS}}}{}%
\def\@@eqncr{\let\@tempa\relax
    \ifcase\@eqcnt \def\@tempa{& & &}\or \def\@tempa{& &}%
      \else \def\@tempa{&}\fi
     \@tempa
     \if@eqnsw
        \iftag@
           \@taggnum
        \else
           \@eqnnum\stepcounter{equation}%
        \fi
     \fi
     \global\tag@false
     \global\@eqnswtrue
     \global\@eqcnt\z@\cr}
\def\TCItag{\@ifnextchar*{\@TCItagstar}{\@TCItag}}
\def\@TCItag#1{%
    \global\tag@true
    \global\def\@taggnum{(#1)}}
\def\@TCItagstar*#1{%
    \global\tag@true
    \global\def\@taggnum{#1}}
\def\dfrac#1#2{{\displaystyle {#1 \over #2}}}%
\def\QDATOP#1#2{{\displaystyle {#1 \atop #2}}}%
\let\DOTSI\relax
\def\RIfM@{\relax\ifmmode}%
\def\FN@{\futurelet\next}%
\def\iint{\DOTSI\intno@\tw@\FN@\ints@}%
\def\iiint{\DOTSI\intno@\thr@@\FN@\ints@}%
\def\iiiint{\DOTSI\intno@4 \FN@\ints@}%
\def\idotsint{\DOTSI\intno@\z@\FN@\ints@}%
\def\ints@{\findlimits@\ints@@}%
\newif\iflimtoken@
\newif\iflimits@
\def\findlimits@{\limtoken@true\ifx\next\limits\limits@true
 \else\ifx\next\nolimits\limits@false\else
 \limtoken@false\ifx\ilimits@\nolimits\limits@false\else
 \ifinner\limits@false\else\limits@true\fi\fi\fi\fi}%
\def\multint@{\int\ifnum\intno@=\z@\intdots@                          
 \else\intkern@\fi                                                    
 \ifnum\intno@>\tw@\int\intkern@\fi                                   
 \ifnum\intno@>\thr@@\int\intkern@\fi                                 
 \int}
\def\multintlimits@{\intop\ifnum\intno@=\z@\intdots@\else\intkern@\fi
 \ifnum\intno@>\tw@\intop\intkern@\fi
 \ifnum\intno@>\thr@@\intop\intkern@\fi\intop}%
\def\intic@{%
    \mathchoice{\hskip.5em}{\hskip.4em}{\hskip.4em}{\hskip.4em}}%
\def\negintic@{\mathchoice
 {\hskip-.5em}{\hskip-.4em}{\hskip-.4em}{\hskip-.4em}}%
\def\ints@@{\iflimtoken@                                              
 \def\ints@@@{\iflimits@\negintic@
   \mathop{\intic@\multintlimits@}\limits                             
  \else\multint@\nolimits\fi                                          
  \eat@}
 \else                                                                
 \def\ints@@@{\iflimits@\negintic@
  \mathop{\intic@\multintlimits@}\limits\else
  \multint@\nolimits\fi}\fi\ints@@@}%
\def\intkern@{\mathchoice{\!\!\!}{\!\!}{\!\!}{\!\!}}%
\def\plaincdots@{\mathinner{\cdotp\cdotp\cdotp}}%
\def\intdots@{\mathchoice{\plaincdots@}%
 {{\cdotp}\mkern1.5mu{\cdotp}\mkern1.5mu{\cdotp}}%
 {{\cdotp}\mkern1mu{\cdotp}\mkern1mu{\cdotp}}%
 {{\cdotp}\mkern1mu{\cdotp}\mkern1mu{\cdotp}}}%
\def\RIfM@{\relax\protect\ifmmode}
\def\text{\RIfM@\expandafter\text@\else\expandafter\mbox\fi}
\let\nfss@text\text
\def\text@#1{\mathchoice
   {\textdef@\displaystyle\f@size{#1}}%
   {\textdef@\textstyle\tf@size{\firstchoice@false #1}}%
   {\textdef@\textstyle\sf@size{\firstchoice@false #1}}%
   {\textdef@\textstyle \ssf@size{\firstchoice@false #1}}%
   \glb@settings}
\def\textdef@#1#2#3{\hbox{{%
                    \everymath{#1}%
                    \let\f@size#2\selectfont
                    #3}}}
\newif\iffirstchoice@
\def\Let@{\relax\iffalse{\fi\let\\=\cr\iffalse}\fi}%
\def\vspace@{\def\vspace##1{\crcr\noalign{\vskip##1\relax}}}%
\def\multilimits@{\bgroup\vspace@\Let@
 \baselineskip\fontdimen10 \scriptfont\tw@
 \advance\baselineskip\fontdimen12 \scriptfont\tw@
 \lineskip\thr@@\fontdimen8 \scriptfont\thr@@
 \lineskiplimit\lineskip
 \vbox\bgroup\ialign\bgroup\hfil$\m@th\scriptstyle{##}$\hfil\crcr}%
\def\Sb{_\multilimits@}%
\def\endSb{\crcr\egroup\egroup\egroup}%
\def\Sp{^\multilimits@}%
\newdimen\ex@
\def\rightarrowfill@#1{$#1\m@th\mathord-\mkern-6mu\cleaders
 \hbox{$#1\mkern-2mu\mathord-\mkern-2mu$}\hfill
 \mkern-6mu\mathord\rightarrow$}%
\def\leftarrowfill@#1{$#1\m@th\mathord\leftarrow\mkern-6mu\cleaders
 \hbox{$#1\mkern-2mu\mathord-\mkern-2mu$}\hfill\mkern-6mu\mathord-$}%
\def\leftrightarrowfill@#1{$#1\m@th\mathord\leftarrow
\mkern-6mu\cleaders
 \hbox{$#1\mkern-2mu\mathord-\mkern-2mu$}\hfill
 \mkern-6mu\mathord\rightarrow$}%
\def\overrightarrow{\mathpalette\overrightarrow@}%
\def\overrightarrow@#1#2{\vbox{\ialign{##\crcr\rightarrowfill@#1\crcr
 \noalign{\kern-\ex@\nointerlineskip}$\m@th\hfil#1#2\hfil$\crcr}}}%
\def\overleftarrow{\mathpalette\overleftarrow@}%
\def\overleftarrow@#1#2{\vbox{\ialign{##\crcr\leftarrowfill@#1\crcr
 \noalign{\kern-\ex@\nointerlineskip}$\m@th\hfil#1#2\hfil$\crcr}}}%
\def\overleftrightarrow{\mathpalette\overleftrightarrow@}%
\def\overleftrightarrow@#1#2{\vbox{\ialign{##\crcr
   \leftrightarrowfill@#1\crcr
 \noalign{\kern-\ex@\nointerlineskip}$\m@th\hfil#1#2\hfil$\crcr}}}%
\def\underrightarrow{\mathpalette\underrightarrow@}%
\def\underrightarrow@#1#2{\vtop{\ialign{##\crcr$\m@th\hfil#1#2\hfil
  $\crcr\noalign{\nointerlineskip}\rightarrowfill@#1\crcr}}}%
\def\underleftarrow{\mathpalette\underleftarrow@}%
\def\underleftarrow@#1#2{\vtop{\ialign{##\crcr$\m@th\hfil#1#2\hfil
  $\crcr\noalign{\nointerlineskip}\leftarrowfill@#1\crcr}}}%
\def\underleftrightarrow{\mathpalette\underleftrightarrow@}%
\def\underleftrightarrow@#1#2{\vtop{\ialign{##\crcr$\m@th
  \hfil#1#2\hfil$\crcr
 \noalign{\nointerlineskip}\leftrightarrowfill@#1\crcr}}}%
\def\qopnamewl@#1{\mathop{\operator@font#1}\nlimits@}
\let\nlimits@\displaylimits
\def\setboxz@h{\setbox\z@\hbox}
\def\varlim@#1#2{\mathop{\vtop{\ialign{##\crcr
 \hfil$#1\m@th\operator@font lim$\hfil\crcr
 \noalign{\nointerlineskip}#2#1\crcr
 \noalign{\nointerlineskip\kern-\ex@}\crcr}}}}
 \def\rightarrowfill@#1{\m@th\setboxz@h{$#1-$}\ht\z@\z@
  $#1\copy\z@\mkern-6mu\cleaders
  \hbox{$#1\mkern-2mu\box\z@\mkern-2mu$}\hfill
  \mkern-6mu\mathord\rightarrow$}
\def\leftarrowfill@#1{\m@th\setboxz@h{$#1-$}\ht\z@\z@
  $#1\mathord\leftarrow\mkern-6mu\cleaders
  \hbox{$#1\mkern-2mu\copy\z@\mkern-2mu$}\hfill
  \mkern-6mu\box\z@$}
\def\projlim{\qopnamewl@{proj\,lim}}
\def\injlim{\qopnamewl@{inj\,lim}}
\def\varinjlim{\mathpalette\varlim@\rightarrowfill@}
\def\varprojlim{\mathpalette\varlim@\leftarrowfill@}
\def\varliminf{\mathpalette\varliminf@{}}
\def\varliminf@#1{\mathop{\underline{\vrule\@depth.2\ex@\@width\z@
   \hbox{$#1\m@th\operator@font lim$}}}}
\def\varlimsup{\mathpalette\varlimsup@{}}
\def\varlimsup@#1{\mathop{\overline
  {\hbox{$#1\m@th\operator@font lim$}}}}
\def\align{\@verbatim \frenchspacing\@vobeyspaces \@alignverbatim
You are using the "align" environment in a style in which it is not defined.}
\let\csname endalign*\endcsname =\endtrivlist
\def\alignat{\@verbatim \frenchspacing\@vobeyspaces \@alignatverbatim
You are using the "alignat" environment in a style in which it is not defined.}
\let\csname endalignat*\endcsname =\endtrivlist
\def\xalignat{\@verbatim \frenchspacing\@vobeyspaces \@xalignatverbatim
You are using the "xalignat" environment in a style in which it is not defined.}
\let\csname endxalignat*\endcsname =\endtrivlist
\def\gather{\@verbatim \frenchspacing\@vobeyspaces \@gatherverbatim
You are using the "gather" environment in a style in which it is not defined.}
\let\csname endgather*\endcsname =\endtrivlist
\def\multiline{\@verbatim \frenchspacing\@vobeyspaces \@multilineverbatim
You are using the "multiline" environment in a style in which it is not defined.}
\let\csname endmultiline*\endcsname =\endtrivlist
\def\arrax{\@verbatim \frenchspacing\@vobeyspaces \@arraxverbatim
You are using a type of "array" construct that is only allowed in AmS-LaTeX.}
\def\tabulax{\@verbatim \frenchspacing\@vobeyspaces \@tabulaxverbatim
You are using a type of "tabular" construct that is only allowed in AmS-LaTeX.}
\let\csname endarrax*\endcsname =\endtrivlist
\let\csname endtabulax*\endcsname =\endtrivlist
 \def\endequation{%
     \ifmmode\ifinner 
      \iftag@
        \addtocounter{equation}{-1} 
        $\hfil
           \displaywidth\linewidth\@taggnum\egroup \endtrivlist
        \global\tag@false
        \global\@ignoretrue   
      \else
        $\hfil
           \displaywidth\linewidth\@eqnnum\egroup \endtrivlist
        \global\tag@false
        \global\@ignoretrue 
      \fi
     \else   
      \iftag@
        \addtocounter{equation}{-1} 
        \eqno \hbox{\@taggnum}
        \global\tag@false%
        $$\global\@ignoretrue
      \else
        \eqno \hbox{\@eqnnum}
        $$\global\@ignoretrue
      \fi
     \fi\fi
 } 
 \newif\iftag@ \tag@false
 \def\TCItag{\@ifnextchar*{\@TCItagstar}{\@TCItag}}
 \def\@TCItag#1{%
     \global\tag@true
     \global\def\@taggnum{(#1)}}
 \def\@TCItagstar*#1{%
     \global\tag@true
     \global\def\@taggnum{#1}}
     \def\tag{\@ifnextchar*{\@tagstar}{\@tag}}
     \def\@tag#1{%
         \global\tag@true
         \global\def\@taggnum{(#1)}}
     \def\@tagstar*#1{%
         \global\tag@true
         \global\def\@taggnum{#1}}
\begin{document}

\begin{center}
{\Large Proof Compression and NP Versus PSPACE\smallskip . Part 2 }%
\marginpar{
July 2019}

L. Gordeev

l\texttt{ew.gordeew@uni-tuebingen.de\medskip }
\end{center}

\textbf{Abstract. }{\small We upgrade \cite{GH} to a complete proof of the
conjecture NP = PSPACE} {\small that is known as one of the fundamental open
problems in the mathematical theory of computational complexity. Since
minimal propositional logic is known to be PSPACE complete, while PSPACE to
include NP, it suffices to show that every valid purely implicational
formula }$\rho ${\small \ has a proof whose weight (= total number of
symbols) and time complexity of the provability involved are both polynomial
in the weight of }$\rho ${\small . As is \cite{GH}, we use proof theoretic
approach -- in both sequential and natural deduction forms. Recall that in 
\cite{GH} we considered any valid }$\rho ${\small \ in question that had (by
the definition of validity) a ``short'' tree-like proof }$\pi ${\small \ in
the Hudelmaier-style cutfree sequent calculus for minimal logic. The
``shortness'' means that the height of }$\pi ${\small , the total number and
maximum weight of different formulas occurring in it are all polynomial in
the weight of }$\rho ${\small . However, the size (= total number of nodes),
and hence also the weight, of }$\pi $ {\small could be exponential in that
of }$\rho ${\small . To overcome this trouble we embedded }$\pi $ {\small %
into Prawitz's proof system of natural deductions containing single
formulas, instead of sequents. As in }$\pi ${\small , the height, the total
number and maximum formula weight of the resulting tree-like natural
deduction }$\partial _{1}${\small \ are polynomial, whereas the size of }$%
\partial _{1}$ still {\small could be exponential, in the weight of }$\rho $%
{\small . In our next, crucial move, }$\partial _{1}${\small \ was
deterministically compressed into a ``small'' dag-like deduction }$\partial $%
{\small \ whose horizontal levels contained only mutually different
formulas, which made the whole weight polynomial in that of }$\rho ${\small %
. However, }$\partial ${\small \ required a more complicated verification of
the underlying provability of }$\rho ${\small . In the present paper we
further compress }$\partial ${\small \ into a desired deduction }$\partial
_{0}${\small \ that deterministically proves }$\rho ${\small \ in time and
space polynomial in the weight of }$\rho ${\small . [Working in a natural
deduction calculus is essential because tree-to-dag horizontal compression
of }$\pi ${\small \ merging equal sequents, instead of formulas, is
(possible but) insufficient, since the total number of different sequents
occurring in }$\pi ${\small \ might be exponential in the weight of }$\rho $%
{\small \ -- even assuming that all formulas occurring in sequents are
subformulas of }$\rho $.{\small ]}

\section{Introduction}

Gordeev and Haeusler \cite{GH} presented a dag-like version of Prawitz's 
\cite{Prawitz} tree-like natural deduction calculus for minimal logic, 
\textsc{NM}$_{\rightarrow }$, and left open a problem of computational
complexity of the dag-like provability involved (\cite{GH}: Problem 22). In
this paper we show a solution that proves the conjecture \textbf{NP = PSPACE}%
. To explain it briefly first consider plain tree-like provability. Recall
that our basic deduction calculus \textsc{NM}$_{\rightarrow }$ includes two
standard inferences 
\begin{equation*}
\fbox{$\left( \rightarrow I\right) :\dfrac{\QDATOP{\QDATOP{\left[ \alpha %
\right] }{\vdots }}{\beta }}{\alpha \rightarrow \beta }$}\ ,\quad \fbox{$\
\left( \rightarrow E\right) :\dfrac{\alpha \quad \quad \alpha \rightarrow
\beta }{\beta }$}
\end{equation*}
and one auxiliary repetition rule $\fbox{$\left( \rightarrow R\right) :%
\dfrac{\alpha }{\alpha \ }$}$, where $\left[ \alpha \right] $ in $\left(
\rightarrow I\right) $\ indicates that all $\alpha $-leaves occurring above $%
\beta $-node exposed are \emph{discharged} assumptions.

\begin{definition}
\begin{enumerate}
\item  A given (whether tree- or dag-like) \textsc{NM}$_{\rightarrow }$%
-deduction $\partial $ \emph{proves} its root-formula $\rho $ (abbr.: $%
\partial \vdash \rho $) iff every maximal thread connecting the root with a
leaf labeled $\alpha $ is closed (= discharged), i.e. it contains a $\left(
\rightarrow I\right) $ with conclusion $\alpha \rightarrow \beta $, for some 
$\beta $.

\item  A purely implicational formula $\rho $ is \emph{valid in minimal logic%
} iff there exists a tree-like \textsc{NM}$_{\rightarrow }$-deduction $%
\partial $ that proves $\rho $; such $\partial $ is called a \emph{proof} of 
$\rho $.
\end{enumerate}
\end{definition}

\begin{remark}
Tree-like constraint in the second part of definition is inessential. That
is, for any dag-like $\partial \in \,$\textsc{NM}$_{\rightarrow }$ with
root-formula $\rho $, if $\partial \vdash \rho $ then $\rho $ is valid in
minimal logic.
\end{remark}

This is because any given dag-like $\partial $ can be unfolded into a
tree-like deduction $\partial ^{\prime }$ by straightforward
thread-preserving bottom-up recursion. Namely, every node $x\in \partial $
having $n>1$\ distinct conclusions should be replaced by $n$ distinct nodes $%
x_{1},\cdots ,x_{n}\in \partial ^{\prime }$ with corresponding single
conclusions and (identical) premises of $x$. This operation obviously
preserves the closure of threads, i.e. $\partial \vdash \rho $ infers $%
\partial ^{\prime }\vdash \rho $.

Formal verification of the assertion $\partial \vdash \rho $ is simple, as
follows -- whether for tree-like or generally dag-like $\partial $. Every
node $x\in \partial $ is assigned, by descending recursion, a set of
assumptions $A\left( x\right) $ such that:

\begin{enumerate}
\item  $A\left( x\right) :=\left\{ \alpha \right\} $ if $x$ is a leaf
labeled $\alpha $,

\item  $A\left( x\right) :=A\left( y\right) $ if $x$ is the conclusion of $%
\left( \rightarrow R\right) $ with premise $y$,

\item  $A\left( x\right) :=A\left( y\right) \setminus \left\{ \alpha
\right\} $ if $x$ is the conclusion of $\left( \rightarrow I\right) $ with
label $\alpha \rightarrow \beta $\ and premise $y$,

\item  $A\left( x\right) :=A\left( y\right) \cup A\left( z\right) $ if $x$
is the conclusion of $\left( \rightarrow E\right) $ with premises $y,$ $z$.
\end{enumerate}

This easily yields

\begin{lemma}
Let $\partial \in \,$\textsc{NM}$_{\rightarrow }$ (whether tree- or
dag-like). Then $\partial \vdash \rho \Leftrightarrow A\left( r\right)
=\emptyset $ holds with respect to standard set-theoretic interpretations of 
$\cup $ and $\setminus $ in $A\left( r\right) $, where $r$ and $\rho $\ are
the root and root-formula of $\partial $, respectively. Moreover, $A\left(
r\right) \overset{?}{=}\emptyset $ is verifiable by a deterministic TM in $%
\left| \partial \right| $-polynomial time, where by $\left| \partial \right| 
$ we denote the weight (i.e. total number of symbols) of $\partial $. 
\footnote{{\footnotesize The latter is completely analogous to the
well-known polynomial-time decidability of the circuit value problem (see
also Appendix).}}
\end{lemma}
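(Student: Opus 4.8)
The plan is to obtain both directions of the equivalence at once from a single invariant about the function $A$, and to read off the complexity bound from the observation that the recursion defining $A$ is nothing but a straight-line evaluation over a dag. For the invariant, by the same descending recursion that defines $A$, I would verify that for every node $x\in\partial$
\[
A(x)=\bigl\{\,\alpha \;:\; \text{there is an $\alpha$-labeled leaf above $x$ whose thread down to $x$ meets no $(\rightarrow I)$ discharging $\alpha$}\,\bigr\},
\]
i.e.\ $A(x)$ is the set of assumptions still open at $x$. The base case ($x$ a leaf labeled $\alpha$) is immediate. For $(\rightarrow R)$ and $(\rightarrow E)$ no thread acquires a new discharging inference, so clauses 2 and 4 merely propagate, resp.\ unite, the invariant. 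For $(\rightarrow I)$ with label $\alpha\rightarrow\beta$ and premise $y$: every thread from an $\alpha$-leaf above $x$ reaches $x$ through the unary step $y\rightarrow x$, which discharges $\alpha$, so $\alpha$ leaves the open set, while threads to $\gamma$-leaves with $\gamma\neq\alpha$ are unaffected; hence $A(x)=A(y)\setminus\{\alpha\}$, matching clause 3. Since $\partial$ is acyclic and each node is the conclusion of a unique inference, this recursion is well-founded and $A(x)$ depends only on the part of $\partial$ above $x$, so the argument is uniform for tree-like and dag-like $\partial$; alternatively one may invoke the unfolding of the preceding remark, which preserves threads, discharges, and every value $A(x)$. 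Specializing to the root $r$ with root-formula $\rho$: $A(r)=\emptyset$ iff no leaf-to-root thread is left open, which is exactly $\partial\vdash\rho$, giving both $\Rightarrow$ and $\Leftarrow$.

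For the complexity claim, computing $A(r)$ is the evaluation of the straight-line program read off $\partial$ in any topological order of its nodes, over the semilattice of finite sets of formulas. Since $\partial$ has at most $|\partial|$ leaves and every formula occurring in it has weight at most $|\partial|$, one first sorts the leaf-labels (each comparison $O(|\partial|)$) to obtain an enumeration $\alpha_{1},\dots,\alpha_{m}$, $m\le|\partial|$, of all possible members of any $A(x)$; thereafter each $A(x)$ is stored as an $m$-bit vector, clause 1 sets a bit, clause 2 copies, clause 3 clears a bit, clause 4 is a bitwise OR, and each of the $\le|\partial|$ nodes is processed in $O(|\partial|)$ steps. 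Thus $A(r)$, and then the test $A(r)\overset{?}{=}\emptyset$, are produced by a deterministic TM in time polynomial (indeed cubic) in $|\partial|$, exactly as for the circuit value problem.

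I expect no genuine obstacle here; the only point that needs care is matching ``maximal thread connecting the root with a leaf'' against the recursively defined $A$ when, in the dag-like case, a node lies on several threads, and this is handled by the remark above that $A(x)$ is computed purely from what is above $x$, together with the thread-preserving unfolding of the preceding remark. I would also note explicitly that the polynomial bound uses nothing from the intended application — in particular not the subformula property — only that $\partial$ has $\le|\partial|$ nodes and that each formula in it has weight $\le|\partial|$.
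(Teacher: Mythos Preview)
Your proposal is correct and follows the same route the paper takes: the paper states the lemma as an immediate consequence of the recursive clauses (``This easily yields'') and refers to the circuit-value analogy and the Appendix for the polynomial-time claim, while you simply spell out the underlying invariant $A(x)=\{\alpha:\text{some ascending path from }x\text{ to an }\alpha\text{-leaf avoids every }(\rightarrow I)\text{ discharging }\alpha\}$ and the straight-line evaluation that the paper leaves implicit. Your care about the dag-like case (that $A(x)$ depends only on the cone above $x$, or alternatively the thread-preserving unfolding of Remark~2) is exactly the point the paper glosses over, so nothing is missing.
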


Now let us upgrade \textsc{NM}$_{\rightarrow }$ to \textsc{NM}$_{\rightarrow
}^{\flat }$\ by adding a new \emph{separation} rule $\left( \rightarrow
S\right) $%
\begin{equation*}
\fbox{$\left( \rightarrow S\right) :\dfrac{\overset{n\ times}{\overbrace{%
\alpha \quad \cdots \quad \alpha }}}{\alpha \ }\ $($n$ arbitrary) }
\end{equation*}
whose identical premises are understood disjunctively: ``\emph{if at least
one premise is proved then so is the conclusion}'' (in contrast to ordinary
conjunctive inference: ``\emph{if all premises are proved then so is the
conclusion}''). Note that in dag-like deductions the nodes might have
several conclusions (unlike in tree-like ones). The modified assignment $A$\
in \textsc{NM}$_{\rightarrow }^{\flat }$ (that works in both tree-like and
dag-like cases) is defined by adding to old recursive clauses 1--4 (see
above) a new clause 5 with new separation symbol $\circledS $ :

\begin{description}
\item  5. $A\left( x\right) =\circledS \left( A\left( y_{1}\right) ,\cdots
,\,A\left( y_{n}\right) \right) $ if $x$ is the conclusion of $\left(
\rightarrow S\right) $ with premises $y_{1},\cdots ,y_{n}$.
\end{description}

\begin{claim}
For any dag-like deduction $\partial \in \,$\textsc{NM}$_{\rightarrow
}^{\flat }$\ whose root $r$ is labeled $\rho $, $\rho $ is valid in minimal
logic, provided that $A\left( r\right) $ reduces to $\emptyset $ (abbr.: $%
A\left( r\right) \vartriangleright \emptyset $) by standard set-theoretic
interpretations of $\cup $, $\setminus $ and nondeterministic disjunctive
valuation $\circledS \left( t_{1},\cdots ,\,t_{n}\right) :=t_{i}$, for any
chosen $i\in \left\{ 1,\cdots ,n\right\} $. Moreover, the assertion $A\left(
r\right) \vartriangleright \emptyset $ (that is also referred to as `$%
\partial $ \emph{proves} $\rho $') can be confirmed by a nondeterministic TM
in $\left| \partial \right| $-polynomial time.
\end{claim}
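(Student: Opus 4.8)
The plan is to read the nondeterministic disjunctive valuation of $\circledS $ literally as a guess and then invoke the preceding Lemma on the resulting pruned \textsc{NM}$_{\rightarrow }$-deduction, so that the whole task becomes a single existential guess followed by the already-established deterministic polynomial-time verification.

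\emph{Validity.} Suppose $A\left( r\right) \vartriangleright \emptyset $. Since $A\left( r\right) $ is produced by the descending recursion 1--5 over the acyclic $\partial $, it is a (dag-)term built from singletons by $\cup $, $\setminus \left\{ \cdot \right\} $ and $\circledS $, and a reduction witnessing $A\left( r\right) \vartriangleright \emptyset $ amounts to a choice function $c$ assigning to every node $x$ that is the conclusion of a $\left( \rightarrow S\right) $ with premises $y_{1},\cdots ,y_{n}$ an index $c\left( x\right) \in \left\{ 1,\cdots ,n\right\} $, after which $\circledS \left( A\left( y_{1}\right) ,\cdots ,A\left( y_{n}\right) \right) $ is evaluated as $A\left( y_{c\left( x\right) }\right) $ and the whole term evaluates to $\emptyset $. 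Fix such a $c$ and prune $\partial $ to $\partial _{c}$ by deleting, at each such $x$, every incoming edge except the one from $y_{c\left( x\right) }$, and then discarding all nodes no longer reachable from $r$. Then $\partial _{c}\in \,$\textsc{NM}$_{\rightarrow }$: the retained $\left( \rightarrow E\right) $- and $\left( \rightarrow I\right) $-inferences are untouched, while each former $\left( \rightarrow S\right) $-node $x$ now has the single premise $y_{c\left( x\right) }$ carrying the same formula as its conclusion and discharging nothing, i.e.\ it is an instance of $\left( \rightarrow R\right) $. A straightforward induction over the reachable part of the dag shows that the \textsc{NM}$_{\rightarrow }$-assignment of $\partial _{c}$ obtained from clauses 1--4 satisfies $A_{c}\left( x\right) =$ the $c$-reduced value of $A\left( x\right) $ for every reachable $x$; in particular $A_{c}\left( r\right) =\emptyset $. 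By the Lemma this gives $\partial _{c}\vdash \rho $, and hence, by the Remark, $\rho $ is valid in minimal logic.

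\emph{Complexity.} The nondeterministic machine guesses $c$ once and then runs deterministically. The guess is one index $\leq n_{x}\leq \left| \partial \right| $ per $\left( \rightarrow S\right) $-node, altogether $O\!\left( \left| \partial \right| \log \left| \partial \right| \right) $ bits, written in time polynomial in $\left| \partial \right| $. Constructing $\partial _{c}$ from $\partial $ and $c$ only deletes edges and nodes, so $\left| \partial _{c}\right| \leq \left| \partial \right| $ and the construction is deterministic polynomial time. Finally, running on $\partial _{c}$ the deterministic polynomial-time test for $A_{c}\left( r\right) \overset{?}{=}\emptyset $ supplied by the Lemma (the sets involved are subsets of the $\leq \left| \partial \right| $ leaf-labels, computed by $\leq \left| \partial \right| $ recursion steps, each a union, a set-difference or a copy) and accepting iff it answers ``yes'' completes the procedure. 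An accepting computation exists iff some $c$ yields $A_{c}\left( r\right) =\emptyset $ iff $A\left( r\right) \vartriangleright \emptyset $, and every branch halts in $\left| \partial \right| $-polynomial time.

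The only place that needs care -- the ``hard part'', such as it is -- is the bookkeeping identifying the algebraic reduction $\vartriangleright $ with the syntactic pruning: one must check that deleting the non-chosen premises of a $\left( \rightarrow S\right) $ genuinely turns that node into a legal $\left( \rightarrow R\right) $ (which holds precisely because all premises and the conclusion of $\left( \rightarrow S\right) $ bear the same formula and $\left( \rightarrow S\right) $ discharges no assumption), so that $\partial _{c}$ really lies in \textsc{NM}$_{\rightarrow }$ and the earlier Lemma applies to it verbatim, and that the guessed object has size polynomial in $\left| \partial \right| $ so that the whole procedure stays within \textbf{NP}. Everything else is routine. (Note that the per-node reading of the reduction is the right one here, as $A\left( r\right) $ is defined by recursion on the dag via clause~5; under that reading the equivalence ``$A\left( r\right) \vartriangleright \emptyset $ iff some choice function $c$ gives $A_{c}\left( r\right) =\emptyset $'' is immediate, which is what makes both halves go through.)
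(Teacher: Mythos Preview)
Your proof is correct and follows essentially the same approach as the paper's own argument: guess the disjunctive choices at each $\left( \rightarrow S\right) $-node, thin $\partial $ accordingly to obtain a $\left( \rightarrow S\right) $-free deduction $\partial _{0}\in \,$\textsc{NM}$_{\rightarrow }$ (with the surviving single-premise $\left( \rightarrow S\right) $-instances read as $\left( \rightarrow R\right) $), and then invoke Lemma~3 together with Remark~2 for both validity and the polynomial-time verification. Your write-up is in fact more explicit than the paper's sketch, in particular on the point that the reduction is per-node (matching Definition~11) and on the size bound for the guessed object.
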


This claim reduces to its trivial \textsc{NM}$_{\rightarrow }$ case (see
above). For suppose that $A\left( r\right) \vartriangleright \emptyset $
holds with respect to successive nondeterministic valuation of the
occurrences $\circledS $. This reduction determines successive ascending
(i.e. bottom-up) thinning of $\partial $ that results in a dag-like $%
\partial _{0}\in \,$\textsc{NM}$_{\rightarrow }^{\flat }$, while $A\left(
r\right) \vartriangleright \emptyset $ in $\partial $\ implies $A\left(
r\right) =\emptyset $ in $\partial _{0}$. Since $\left( \rightarrow S\right) 
$ does not occur in $\partial _{0}$ anymore, we have $\partial _{0}\in \,$%
\textsc{NM}$_{\rightarrow }$. \emph{That is, there is a dag-to-dag operation 
}\textsc{NM}$_{\rightarrow }^{\flat }$\emph{\ }$\ni \partial \hookrightarrow
\partial _{0}\in \,$\textsc{NM}$_{\rightarrow }$\emph{\ that preserves
provability, the root-formula and the weight upper bound}. By previous
considerations with regard to \textsc{NM}$_{\rightarrow }$ we conclude that $%
\rho $ is valid in minimal logic, which can be confirmed\ in $\left|
\partial \right| $-polynomial time, as required.

Since minimal logic is PSPACE complete (\cite{Statman}, \cite{Svejdar}), in
order to arrive at the desired conclusion \textbf{NP = PSPACE }it will
suffice to show that for any valid $\rho $ there is a dag-like deduction $%
\partial \in \,$\textsc{NM}$_{\rightarrow }^{\flat }$\ of $\rho $\
satisfying $A\left( r\right) \vartriangleright \emptyset $, whose size and
maximal formula length are polynomial in the weight $\left| \rho \right| $
of $\rho $. But this is a consequence of \cite{GH}. Recall that in \cite{GH}
we presented a deterministic tree-to-dag \emph{horizontal compression} $%
\left| \rho \right| $-polynomially reducing the weight of a given ``large''
tree-like deduction of $\rho $ in \textsc{NM}$_{\rightarrow }$ that is
obtained by embedding a derivation of $\rho $ in a Hudelmaier-style \cite
{Hud} cutfree sequent calculus. This compression resulted in a suitable
``small'' dag-like \emph{deduction frame} together with a \emph{locally
coherent} set of maximal threads, in the\ multipremise expansion of \textsc{%
NM}$_{\rightarrow }$ (called \textsc{NM}$_{\rightarrow }^{\ast }$). This
pair determines a deduction in \textsc{NM}$_{\rightarrow }^{\flat }$ that
admits a \emph{fundamental set of chains} (see below). In this paper we
describe a nondeterministic dag-to-dag \emph{horizontal cleansing} further
converting such \textsc{NM}$_{\rightarrow }^{\flat }$-deduction\ into a
required ``cleansed'' deduction satisfying $A\left( r\right)
\vartriangleright \emptyset $ (actually $A\left( r\right) =\emptyset $,
already in \textsc{NM}$_{\rightarrow }$).

\subsection{Recollection of \protect\cite{GH}}

Recall that $\rho $ is called \emph{dag-like provable} in \textsc{NM}$%
_{\rightarrow }^{\ast }$ iff there exists a \emph{locally correct deduction
frame} $\widetilde{D}=\!\left\langle D,\text{\textsc{s}},\ell ^{\text{%
\textsc{f}}}\right\rangle $ with root-formula $\rho =\ell ^{\text{\textsc{f}}%
}\left( r\right) $, \footnote{{\footnotesize The weight of }$\widetilde{D}$%
{\footnotesize \ is assumed to be polynomial in that of }$\rho $%
{\footnotesize \ (see \cite{GH}).}} together with a \emph{locally coherent}
pair $\partial \!=\!\left\langle \widetilde{D},G\right\rangle $ such that $G:%
\overrightarrow{\text{\textsc{e}}}\!\left( D\right) \rightarrow \left\{
0,1\right\} $ determines a set of threads that confirms alleged validity of $%
\rho $, where $\overrightarrow{\text{\textsc{e}}}\!\left( D\right) $ denotes
the set of edge-chains in $D$. Such $\partial $ is called a \emph{dag-like\
proof} of $\rho $ in \textsc{NM}$_{\rightarrow }^{\ast }$.\footnote{%
{\footnotesize Here and below basic notions and notations are imported from 
\cite{GH}.}} Note that \textsc{NM}$_{\rightarrow }^{\ast }$-deductions may
have arbitrary many premises and/or conclusions. Without loss of generality
we assume that $\widetilde{D}$ is horizontally compressed, i.e. $\ell ^{%
\text{\textsc{f}}}\!\left( x\right) \neq \ell ^{\text{\textsc{f}}}\!\left(
y\right) $ for all $x\neq y$ on the same level in $D$ (\cite{GH}: Ch. 3.1,
3.2). In \cite{GH} we also observed that the local correctness of $%
\widetilde{D}$ is verifiable in $\left| \rho \right| $-polynomial time,
although local coherence of $\!\left\langle \widetilde{D},G\right\rangle $
has no obvious low-complexity upper bound, as $\overrightarrow{\text{\textsc{%
e}}}\!\left( D\right) $ is generally exponential. The currently proposed
upgrade is based on the \emph{fundamental sets of threads}, instead of $G$
and $\overrightarrow{\text{\textsc{e}}}\!\left( D\right) $, as follows.

\subsection{Upgrade in \textsc{NM}$_{\rightarrow }^{\ast }$}

Let $\widetilde{D}=\!\left\langle D,\text{\textsc{s}},\ell ^{\text{\textsc{f}%
}}\right\rangle $ be a given locally correct deduction frame with
root-formula $\rho =\ell ^{\text{\textsc{f}}}\left( r\right) $, \textsc{K}$%
\left( D\right) $ be the set of maximal ascending chains (also called
threads) consisting of nodes (vertices) $u\in \,$\textsc{v}$\left( D\right) $
connecting root $r$ with leaves. A given set $\mathcal{F}\subset \,$\textsc{K%
}$\left( D\right) $ is a \emph{fundamental set of threads} (abbr.: \emph{fst}%
) in $\widetilde{D}$ if the following three conditions are satisfied, where
for any $\Theta =\left[ r=x_{0},\cdots ,x_{h\left( D\right) }\right] \in \,$%
\textsc{K}$\left( D\right) $ and $i\leq h\left( D\right) $ we let $\Theta
\!\upharpoonright _{x_{i}}:=\left[ x_{0},\cdots ,x_{i}\right] $.

\begin{enumerate}
\item  $\mathcal{F}$ is dense in $D$, i.e. $\left( \forall u\in \text{%
\textsc{v}}\left( D\right) \right) \left( \exists \Theta \in \mathcal{F}%
\right) \left( u\in \Theta \right) $.

\item  Every $\Theta \in \mathcal{F}\,$is closed, i.e. its leaf-formula $%
\ell ^{\text{\textsc{f}}}\!\left( x_{h\left( D\right) }\right) $ is
discharged in $\Theta $.

\item  $\mathcal{F}$ preserves $\left( \rightarrow E\right) $, i.e.

$\left. 
\begin{array}{c}
\left( \forall \Theta \in \mathcal{F}\right) \left( \forall u\in \Theta
\right) \left( \forall v\neq w\in \text{\textsc{v}}\left( D\right)
\!:\left\langle u,v\right\rangle \!,\left\langle u,w\right\rangle \in \text{%
\textsc{e}}\left( D\right) \wedge v\in \Theta \right) \\ 
\left( \exists \Theta ^{\prime }\in \mathcal{F}\right) \left( w\in \Theta
^{\prime }\wedge \Theta \!\upharpoonright _{u}=\Theta ^{\prime
}\!\upharpoonright _{u}\right) \qquad \quad \qquad \qquad \qquad .\qquad
\qquad \quad
\end{array}
\right. $
\end{enumerate}

\begin{lemma}
Let $\widetilde{D}$ be as above and suppose that there exists a \emph{fst} $%
\mathcal{F}$\ in $\widetilde{D}$. Then $\rho $ is dag-like provable in 
\textsc{NM}$_{\rightarrow }^{\ast }$.
\end{lemma}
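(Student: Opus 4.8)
The plan is to turn the hypothesised \emph{fst} $\mathcal{F}$ into a locally coherent pair $\partial=\left\langle \widetilde{D},G\right\rangle $; by the definition recalled in \S1.1, the existence of such a pair over the given locally correct, horizontally compressed frame $\widetilde{D}$ is exactly what it means for $\rho $ to be dag-like provable in \textsc{NM}$_{\rightarrow }^{\ast }$. So the whole task reduces to three things: (i) defining $G:\overrightarrow{\text{\textsc{e}}}\!\left( D\right) \rightarrow \left\{ 0,1\right\} $, (ii) identifying the set of threads that $G$ determines, and (iii) verifying the local-coherence conditions for that set. There is nothing further to construct, since $\widetilde{D}$ is already supplied.

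For (i) and (ii): given an edge-chain $\vec{e}\in \overrightarrow{\text{\textsc{e}}}\!\left( D\right) $ issuing from the root $r$ and ending at a vertex $u\in \text{\textsc{v}}\!\left( D\right) $, I would set $G(\vec{e})=1$ iff some $\Theta \in \mathcal{F}$ \emph{realises} $\vec{e}$, i.e. the initial segment $\Theta \!\upharpoonright _{u}$ traverses exactly the edges listed in $\vec{e}$, and $G(\vec{e})=0$ otherwise. Because each $\Theta \in \mathcal{F}$ is a maximal ascending chain in the dag $D$, it induces a unique maximal edge-chain, and conversely every $G$-flagged prefix extends to a flagged maximal edge-chain; hence the family of threads ``determined by $G$'' is precisely $\mathcal{F}$ under this identification. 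This routine correspondence is the bridge between the node/thread language of the \emph{fst} definition and the edge-chain language in which $G$ and local coherence are phrased in \cite{GH}.

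For (iii) I would match the three clauses of the \emph{fst} definition against the three requirements of local coherence, one by one. Density of $\mathcal{F}$ says every $u\in \text{\textsc{v}}\!\left( D\right) $ lies on some $G$-flagged edge-chain, i.e. no vertex of $D$ is left unsupported. Closure says the leaf-formula $\ell ^{\text{\textsc{f}}}\!\left( x_{h\left( D\right) }\right) $ of each selected thread is discharged along that thread by an $\left( \rightarrow I\right) $ below it, which is the requirement that every selected thread be a closed thread. Finally, $\left( \rightarrow E\right) $-preservation is exactly the branching-coherence condition at elimination nodes: whenever a selected thread passes through an $\left( \rightarrow E\right) $-vertex $u$ along one of its two incoming premises $v$, the clause furnishes a second selected thread agreeing with the first on $\Theta \!\upharpoonright _{u}$ and leaving $u$ through the other premise $w$, so that both premises of every used $\left( \rightarrow E\right) $ are themselves carried by selected threads. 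Together these three matches say precisely that $\left\langle \widetilde{D},G\right\rangle $ is locally coherent, hence a dag-like proof of $\rho $ in \textsc{NM}$_{\rightarrow }^{\ast }$, which is the assertion of the lemma.

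The hard part will not be any single one of these steps but step (ii): one must check that the edge-chain formalism of \cite{GH} faithfully encodes the node-level data of $\mathcal{F}$ — in particular that the agreement clause $\Theta \!\upharpoonright _{u}=\Theta ^{\prime }\!\upharpoonright _{u}$ in condition 3 is strong enough both to make $G$ well defined and to guarantee that the $G$-selected family is closed under the prefix-and-rebranch move built into local coherence, with no conflicting choices forced at two distinct $\left( \rightarrow E\right) $-vertices lying on a single thread. Once that translation is pinned down, the lemma follows from the three matches above together with the definition of dag-like provability recalled in \S1.1.
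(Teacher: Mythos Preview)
Your proposal is correct and takes essentially the same route as the paper: define $G$ so that the $G$-selected threads are exactly $\mathcal{F}$, then verify the local coherence conditions of \cite{GH} (Definition~6) against the \emph{fst} clauses, with the $(\rightarrow E)$-preservation clause doing the work for the branching condition. The only cosmetic differences are that the paper phrases $G(\vec{e})=1$ via the canonical thread-expansion $\Theta[\vec{e}]\in\mathcal{F}$ rather than your prefix criterion, and that local coherence in \cite{GH} actually has four conditions (numbered 1, 2, 4, 5) to check rather than three; condition~4 is the one matched to \emph{fst} clause~3, exactly as you identify.
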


\begin{proof}
Define $G:\overrightarrow{\text{\textsc{e}}}\!\left( D\right) \rightarrow
\left\{ 0,1\right\} $ by setting $G\left( \overrightarrow{e}\right) =1$ iff $%
\Theta \left[ \overrightarrow{e}\right] \in \mathcal{F}$, where $\Theta %
\left[ \overrightarrow{e}\right] \in \,$\textsc{K}$\left( D\right) $
contains all nodes occurring in the canonical thread-expansion of $%
\overrightarrow{e}$. Then $\partial \!=\left\langle \widetilde{D}%
,G\right\rangle $ is a dag-like\emph{\ }proof of $\rho $. The local
coherence conditions 1, 2, 4, 5 (cf. \cite{GH}: Definition 6) are easily
verified. In particular, 4 follows from the third \emph{fst} condition with
respect to $\mathcal{F}$.
\end{proof}

\begin{lemma}
For any dag-like proof $\left\langle \widetilde{D},G\right\rangle $ of $\rho 
$ there are $D_{0}\subseteq \,D$, $G_{0}:\overrightarrow{\text{\textsc{e}}}%
\!\left( D_{0}\right) \rightarrow \left\{ 0,1\right\} $, $\mathcal{F}\subset
\,$\textsc{K}$\left( D_{0}\right) $\ and a dag-like\emph{\ }proof $%
\left\langle \widetilde{D_{0}},G_{0}\right\rangle $ of $\rho $ such that $%
\mathcal{F}$ is \emph{fst} in $\widetilde{D_{0}}$.
\end{lemma}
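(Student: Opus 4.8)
The plan is to obtain the fundamental set of threads $\mathcal{F}$ directly from the marking $G$ and then to shrink $D$ to the part that $\mathcal{F}$ actually traverses. First I would put $\mathcal{F}:=\left\{ \Theta\left[\overrightarrow{e}\right]:\overrightarrow{e}\in\overrightarrow{\text{\textsc{e}}}\!\left(D\right),\ G\!\left(\overrightarrow{e}\right)=1\right\}$, using the same canonical passage from edge-chains to threads as in the proof of the preceding lemma, so that each $\Theta\in\mathcal{F}$ lies in \textsc{K}$\left(D\right)$. By the local coherence of $\left\langle\widetilde{D},G\right\rangle$ (conditions 2 and 4 of \cite{GH}, Definition 6) every $\Theta\in\mathcal{F}$ is closed, and whenever such a $\Theta$ runs through one premise-edge of an $\left(\rightarrow E\right)$-node some $\Theta'\in\mathcal{F}$ sharing the same initial segment runs through the other premise-edge. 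I would then let $D_{0}$ be the sub-dag of $D$ whose vertices and edges are exactly those lying on some $\Theta\in\mathcal{F}$, equip it with the restrictions of \textsc{s} and $\ell^{\text{\textsc{f}}}$ to get $\widetilde{D_{0}}$, and set $G_{0}:=G\!\upharpoonright_{\overrightarrow{\text{\textsc{e}}}\left(D_{0}\right)}$.

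The substance of the argument is checking that $\widetilde{D_{0}}$ is still a locally correct deduction frame and that $\left\langle\widetilde{D_{0}},G_{0}\right\rangle$ is still locally coherent. Both are node-local properties, so for each vertex kept in $D_{0}$ it suffices to see that it retains, inside $D_{0}$, every premise-edge its inference type requires: for an $\left(\rightarrow I\right)$- or $\left(\rightarrow R\right)$-node the unique premise-edge survives because the thread through the node continues upward through it; for an $\left(\rightarrow E\right)$-node on some $\Theta\in\mathcal{F}$ both premise-edges survive, one as an edge of $\Theta$ and the other by the $\left(\rightarrow E\right)$-preservation just recalled; and for the remaining (multipremise) node types of \textsc{NM}$_{\rightarrow}^{\ast}$ the appropriate local coherence clause supplies the same bookkeeping. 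Since every thread contains the root $r$ we have $r\in D_{0}$ with $\ell^{\text{\textsc{f}}}\!\left(r\right)=\rho$ still, and since an inference node on a thread is never a leaf of $D_{0}$ (its thread exits it upward) every leaf of $D_{0}$ is a leaf of $D$; hence $\left\langle\widetilde{D_{0}},G_{0}\right\rangle$ is a dag-like proof of $\rho$.

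It remains to confirm that $\mathcal{F}\subset\,$\textsc{K}$\left(D_{0}\right)$ is a \emph{fst} in $\widetilde{D_{0}}$. Density holds by the construction of $D_{0}$. Closedness of each $\Theta\in\mathcal{F}$ is inherited verbatim from the first step. And preservation of $\left(\rightarrow E\right)$ follows because the hypothesis of that condition in $\widetilde{D_{0}}$ ranges only over edges of $\text{\textsc{e}}\!\left(D_{0}\right)\subseteq\text{\textsc{e}}\!\left(D\right)$, so it is a special case of the $\left(\rightarrow E\right)$-preservation already enjoyed by $\left\langle\widetilde{D},G\right\rangle$, with the witnessing thread $\Theta'$ lying in the same set $\mathcal{F}$.

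The step I expect to be the real obstacle is the second one: making sure that deleting the uncovered part of $D$ cannot damage the inference structure at a surviving node, i.e. that local coherence genuinely forces every premise-edge needed for local correctness of a kept node to be itself covered by $\mathcal{F}$. Should some local coherence clause turn out too weak for one of the multipremise node types, the fallback is to thin $\left\langle\widetilde{D},G\right\rangle$ first — discarding uncovered premise-edges, then the nodes that thereby become ill-formed or unreachable — and to iterate, the process terminating since $D$ is finite; but given the strength of the local coherence conditions established in \cite{GH} I expect the single-pass construction above to suffice.
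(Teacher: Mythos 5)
Your proposal is correct and follows essentially the same route as the paper: extract $\mathcal{F}$ from the edge-chains with $G$-value $1$, read closedness and $\left(\rightarrow E\right)$-preservation off the local coherence conditions of \cite{GH} (Definition 6), cut $D$ down to the sub-dag $D_{0}$ covered by $\mathcal{F}$, and observe that density now holds by construction while the remaining conditions are inherited. The only difference is cosmetic: the paper declares local correctness of $\widetilde{D_{0}}$ ``obvious'' and defines $G_{0}$ via the canonical thread-expansion rather than as the restriction of $G$, whereas you spell out the node-local survival of premise-edges -- a worthwhile elaboration, not a divergence.
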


\begin{proof}
Let $\mathcal{F}:=\left\{ \Theta :G\left( \overrightarrow{e}\left[ \Theta %
\right] \right) =1\right\} $ for $\overrightarrow{e}\left[ \Theta \right] :=%
\overrightarrow{e_{m}}\in \overrightarrow{\text{\textsc{e}}}\!\left(
D\right) $ determined by $\Theta $ as specified in \cite{GH}: Definition 8.
It is readily seen that such $\mathcal{F}$ is a \emph{fst} in $\widetilde{D}$%
. The crucial condition 3 follows directly from the corresponding local
coherence condition 4 (cf. \cite{GH}: Definition 6). Let $D_{0}\subseteq D$
be the minimum sub-dag containing every edge occurring in $\underset{\Theta
\in \mathcal{F}}{\bigcup }\Theta $ and let $\widetilde{D_{0}}=\!\left\langle
D_{0},\text{\textsc{s}},\ell ^{\text{\textsc{f}}}\right\rangle $ be the
corresponding sub-frame of $\widetilde{D}$. Obviously $\widetilde{D_{0}}$ is
locally correct. Define $G_{0}:\overrightarrow{\text{\textsc{e}}}\!\left(
D_{0}\right) \rightarrow \left\{ 0,1\right\} $ by setting $G\left( 
\overrightarrow{e}\right) =1$ iff $\Theta \left[ \overrightarrow{e}\right]
\in \mathcal{F}$, where $\Theta \left[ \overrightarrow{e}\right] \in \,$%
\textsc{K}$\left( D_{0}\right) $ contains all nodes occurring in the
canonical thread-expansion of $\overrightarrow{e}$. Then $\partial
\!=\left\langle \widetilde{D_{0}},G_{0}\right\rangle $ is a dag-like\emph{\ }%
proof of $\rho $. The crucial density of $\mathcal{F}$ in $D_{0}$ obviously
follows from definitions of $D_{0}$ and $G_{0}$, as every edge in $D_{0}$
occurs in some thread from $\mathcal{F}$, while for any $\overrightarrow{e}%
\in \overrightarrow{\text{\textsc{e}}}\!\left( D_{0}\right) $ we have $%
G_{0}\left( \overrightarrow{e}\right) =1$ iff $\Theta \left[ \overrightarrow{%
e}\right] \in \mathcal{F}$.
\end{proof}

Together with \cite{GH}: Corollaries 15, 20 these Lemmata yield

\begin{corollary}
Any given $\rho $ is valid in minimal logic iff there exists a pair $%
\left\langle \widetilde{D},\mathcal{F}\right\rangle $ such that $\widetilde{D%
}$ is a locally correct deduction frame with root-formula $\rho =\ell ^{%
\text{\textsc{f}}}\left( r\right) $ and $\mathcal{F}$ being a \emph{fst}\ in 
$\widetilde{D}$. We can just as well assume that $\widetilde{D}$ is
horizontally compressed and its weight is polynomial in that of $\rho $.
\end{corollary}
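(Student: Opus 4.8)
The plan is to read this corollary as a bookkeeping step: it chains the two Lemmata just proved with Corollaries 15 and 20 of \cite{GH}, so the only genuine work is to check that the root-formula, the weight bound and horizontal compression survive each link of the chain.

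For the direction from the pair to validity, I would start from a given $\left\langle \widetilde{D},\mathcal{F}\right\rangle$ with $\widetilde{D}$ a locally correct deduction frame, $\ell^{\text{\textsc{f}}}(r)=\rho$, and $\mathcal{F}$ an \emph{fst} in $\widetilde{D}$. The first Lemma above turns $\mathcal{F}$ into a map $G$ on $\overrightarrow{\text{\textsc{e}}}\!\left( D\right)$ (set $G(\overrightarrow{e})=1$ iff $\Theta\left[\overrightarrow{e}\right]\in\mathcal{F}$) making $\left\langle \widetilde{D},G\right\rangle$ a dag-like proof of $\rho$ in \textsc{NM}$_{\rightarrow}^{\ast}$; hence $\rho$ is dag-like provable there, and by \cite{GH}: Corollaries 15, 20 it is valid in minimal logic. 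No size control is needed in this direction.

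For the converse I would invoke \cite{GH} — the short Hudelmaier-style proof of a valid $\rho$, its embedding into \textsc{NM}$_{\rightarrow}$, and the horizontal compression of Corollaries 15, 20 — to obtain a dag-like proof $\left\langle \widetilde{D},G\right\rangle$ of $\rho$ in \textsc{NM}$_{\rightarrow}^{\ast}$ with $\widetilde{D}$ horizontally compressed and $\left|\widetilde{D}\right|$ polynomial in $\left|\rho\right|$. The second Lemma above then supplies $D_{0}\subseteq D$, a map $G_{0}$ on $\overrightarrow{\text{\textsc{e}}}\!\left( D_{0}\right)$ and $\mathcal{F}\subset\,$\textsc{K}$\left( D_{0}\right)$ such that $\left\langle \widetilde{D_{0}},G_{0}\right\rangle$ is a dag-like proof of $\rho$ and $\mathcal{F}$ is an \emph{fst} in $\widetilde{D_{0}}$; the pair $\left\langle \widetilde{D_{0}},\mathcal{F}\right\rangle$ is then the one asked for.

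It remains to justify the ``we can just as well assume'' clause. Since the passage $\widetilde{D}\mapsto\widetilde{D_{0}}$ only deletes vertices and edges while retaining the labelling $\ell^{\text{\textsc{f}}}$, we get $\left|\widetilde{D_{0}}\right|\leq\left|\widetilde{D}\right|$, still polynomial in $\left|\rho\right|$; and because $\widetilde{D_{0}}$ is a sub-frame of $\widetilde{D}$ with the inherited level structure (every vertex of $D_{0}$ lies on a thread of the dense family $\mathcal{F}$, at the same depth from $r$ as in $D$), the inequality $\ell^{\text{\textsc{f}}}(x)\neq\ell^{\text{\textsc{f}}}(y)$ for distinct level-mates is inherited a fortiori, so $\widetilde{D_{0}}$ stays horizontally compressed. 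I expect the only mildly delicate point to be exactly this transport of the size and compression data through the second Lemma — in particular confirming that restricting to the minimal sub-dag carrying $\bigcup_{\Theta\in\mathcal{F}}\Theta$ does not disturb the leveling that underlies horizontal compression — while the two implications themselves are immediate from the Lemmata and from \cite{GH}.
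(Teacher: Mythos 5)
Your proposal is correct and matches the paper's intent exactly: the paper gives no separate argument but derives the corollary by chaining the two Lemmata with \cite{GH}: Corollaries 15, 20, which is precisely your decomposition (Lemma on \emph{fst} $\Rightarrow$ dag-like provability for one direction, the converse Lemma plus the compressed polynomial-weight frame from \cite{GH} for the other). Your added check that passing to the sub-frame $\widetilde{D_{0}}$ preserves the weight bound and horizontal compression is a correct filling-in of a detail the paper leaves implicit.
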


\begin{remark}
We can't afford $\mathcal{F}$ to be by any means polynomial in $\rho $.
However, mere existence of $\mathcal{F}$ enables a nondeterministic polytime
verification of $A\left( r\right) \vartriangleright \emptyset $ in the
corresponding modified dag-like formalism, as follows.
\end{remark}

\section{Modified dag-like calculus \textsc{NM}$_{\rightarrow }^{\mathrm{%
\flat }}$}

As mentioned above, our modified dag-like deduction calculus, \textsc{NM}$%
_{\rightarrow }^{\mathrm{\flat }}$, includes inference rules $\left(
\rightarrow I\right) $, $\left( \rightarrow E\right) $, $\left( \rightarrow
R\right) $, $\left( \rightarrow S\right) $ (see Introduction). $\left(
\rightarrow I\right) $, $\left( \rightarrow R\right) $ and $\left(
\rightarrow E\right) $ have one and two premises, respectively, whereas $%
\left( \rightarrow S\right) $ has two or more ones. \textsc{NM}$%
_{\rightarrow }^{\mathrm{\flat }}$-deductions are graphically interpreted as
labeled rooted regular dags (abbr.: \emph{redags}, cf. \cite{GH}) $\partial
=\,\left\langle \text{\textsc{v}}\left( \partial \right) ,\text{\textsc{e}}%
\left( \partial \right) \right\rangle \,$, whose nodes may have arbitrary
many parents (conclusions) -- and children (premises), just in the case $%
\left( \rightarrow S\right) $, -- if any at all. The nodes ($x$, $y$, $z$,
...) are labeled by $\ell ^{\text{\textsc{f}}}$ with purely implicational
formulas ($\alpha $, $\beta $, $\gamma $, $\rho $, ...). For the sake of
brevity we'll assume that nodes $x$ are supplied with auxiliary\ \emph{%
height numbers} $h\left( x\right) \in \mathbb{N}$, while all inner nodes
also have special labels $\ell ^{\text{\textsc{n}}}\left( x\right) \in
\left\{ \text{\textsc{i}}_{\rightarrow }\text{\textsc{,e}}_{\rightarrow }%
\text{\textsc{,r}}_{\rightarrow }\text{\textsc{,s}}_{\rightarrow }\right\} $
showing the names of the inference rules $\left( \rightarrow I\right) $, $%
\left( \rightarrow E\right) $, $\left( \rightarrow R\right) $, $\left(
\rightarrow S\right) $ with conclusion $x$. The \emph{roots} and \emph{%
root-formulas} are always designated $r$ and $\rho :=\ell ^{\text{\textsc{f}}%
}\left( r\right) $, respectively. The \emph{edges} $\left\langle
x,y\right\rangle \in \,$\textsc{e}$\left( \partial \right) \subset \,$%
\textsc{v}$\left( \partial \right) ^{2}$ are directed upwards (thus $r$ is
the lowest node in $\partial $) in which $x$ and $y$ are called \emph{parents%
} and \emph{children} of each other, respectively. The \emph{leaves} \textsc{%
l}$\left( \partial \right) \subseteq \,$\textsc{v}$\left( \partial \right) $
are the nodes without children. Tree-like \textsc{NM}$_{\rightarrow }^{%
\mathrm{\flat }}$-deductions are those ones whose redags are trees (whose
nodes have at most one parent).

\begin{definition}
A given NM$_{\rightarrow }^{\mathrm{\flat }}$-deduction $\partial $ is \emph{%
locally correct} if conditions 1--2 are satisfied, for any $x,y,z,u\in \,$%
\textsc{v}$\left( \partial \right) $.

\begin{enumerate}
\item  $\partial $ is regular (cf. \cite{GH}), i.e.

\begin{enumerate}
\item  if $\left\langle x,y\right\rangle \in \,$\textsc{e}$\left( \partial
\right) $ then $x\notin \,$\textsc{l}$\left( \partial \right) $ and $y\neq r$%
,

\item  $h\left( r\right) =0$,

\item  if $\left\langle x,y\right\rangle ,\left\langle x,z\right\rangle \in
\,$\textsc{e}$\left( \partial \right) $ then $h\left( y\right) =h\left(
z\right) =h\left( x\right) +1$.
\end{enumerate}

\item  $\partial $ formalizes the inference rules, i.e. \ 

\begin{enumerate}
\item  if $\ell ^{\text{\textsc{n}}}\left( x\right) =\,$\textsc{r}$%
_{\rightarrow }$ and $\left\langle x,y\right\rangle ,\left\langle
x,z\right\rangle \in \,$\textsc{e}$\left( \partial \right) $ then $y=z$ and $%
\ell ^{\text{\textsc{f}}}\left( y\right) =\ell ^{\text{\textsc{f}}}\left(
x\right) $ [: rule $\left( \rightarrow R\right) $],

\item  if $\ell ^{\text{\textsc{n}}}\left( x\right) =\,$\textsc{i}$%
_{\rightarrow }$ and $\left\langle x,y\right\rangle ,\left\langle
x,z\right\rangle \in \,$\textsc{e}$\left( \partial \right) $ then $y=z$ and $%
\ell ^{\text{\textsc{f}}}\left( x\right) =\alpha \rightarrow \ell ^{\text{%
\textsc{f}}}\left( y\right) $ for some (uniquely determined) $\alpha $ [:
rule $\left( \rightarrow I\right) $],

\item  if $\ell ^{\text{\textsc{n}}}\left( x\right) =\,$\textsc{e}$%
_{\rightarrow }$ and $\left\langle x,y\right\rangle ,\left\langle
x,z\right\rangle ,\left\langle x,u\right\rangle \in \,$\textsc{e}$\left(
\partial \right) $ then $\left| \left\{ x,y,z\right\} \right| =2$ and if $%
y\neq z$ then either $\ell ^{\text{\textsc{f}}}\left( z\right) =\ell ^{\text{%
\textsc{f}}}\left( y\right) \rightarrow \ell ^{\text{\textsc{f}}}\left(
x\right) $ or else $\ell ^{\text{\textsc{f}}}\left( y\right) =\ell ^{\text{%
\textsc{f}}}\left( z\right) \rightarrow \ell ^{\text{\textsc{f}}}\left(
x\right) $ [: rule $\left( \rightarrow E\right) $],

\item  if $\ell ^{\text{\textsc{n}}}\left( x\right) =\,$\textsc{s}$%
_{\rightarrow }$ and $\left\langle x,y\right\rangle \in \,$\textsc{e}$\left(
\partial \right) $ then $\ell ^{\text{\textsc{f}}}\left( y\right) =\ell ^{%
\text{\textsc{f}}}\left( x\right) $ and $\ell ^{\text{\textsc{n}}}\left(
y\right) \neq \,$\textsc{s}$_{\rightarrow }$ [: rule $\left( \rightarrow
S\right) $].
\end{enumerate}
\end{enumerate}
\end{definition}

\textsc{NM}$_{\rightarrow }^{\ast }$ is easily embeddable into \textsc{NM}$%
_{\rightarrow }^{\mathrm{\flat }}$. Namely, consider a locally correct 
\textsc{NM}$_{\rightarrow }^{\ast }$-deduction frame $\widetilde{D}%
=\!\left\langle D,\text{\textsc{s}},\ell ^{\text{\textsc{f}}}\right\rangle $%
. The corresponding locally correct dag-like \textsc{NM}$_{\rightarrow }^{%
\mathrm{\flat }}$-deduction $\partial $ arises from $D$\ by ascending
recursion on the height.\thinspace \footnote{{\footnotesize For brevity we
omit }$h${\footnotesize , as every }$h\left( x\right) ${\footnotesize \ is
uniquely determined by }$x${\footnotesize .}} The root and\ basic
configurations of types $\left( \rightarrow I\right) $, $\left( \rightarrow
E\right) $, $\left( \rightarrow R\right) $ in $\widetilde{D}$\ should remain
unchanged. Furthermore, if $u$ is a node having several groups of premises
in $D$, i.e. $\left| \text{\textsc{s}}\left( u,D\right) \right| >1$ (cf. 
\cite{GH}) then in $\partial $ we separate these groups via $\left(
\rightarrow S\right) $ with $\left| \text{\textsc{s}}\left( u,D\right)
\right| $ identical premises; thus for example \textsc{NM}$_{\rightarrow
}^{\ast }$-configuration in $\widetilde{D}$ 
\begin{equation*}
\fbox{$\ \dfrac{\dfrac{\beta \quad \quad \gamma \quad \quad \gamma
\rightarrow \left( \alpha \rightarrow \beta \right) }{\alpha \rightarrow
\beta }}{\gamma \rightarrow \left( \alpha \rightarrow \beta \right) }$}
\end{equation*}
goes to this \textsc{NM}$_{\rightarrow }^{\mathrm{\flat }}$ -configuration
in $\partial $ 
\begin{equation*}
\fbox{$\left( \rightarrow I\right) \ \dfrac{\left( \rightarrow S\right) \ 
\dfrac{\ \left( \rightarrow I\right) \ \dfrac{\beta }{\alpha \rightarrow
\beta \ }\quad \left( \rightarrow E\right) \ \dfrac{\gamma \quad \quad
\gamma \rightarrow \left( \alpha \rightarrow \beta \right) }{\alpha
\rightarrow \beta \ }}{\alpha \rightarrow \beta \ }}{\gamma \rightarrow
\left( \alpha \rightarrow \beta \right) \ }$}\text{.}
\end{equation*}
Corresponding $\ell ^{\text{\textsc{f}}}$- and $\ell ^{\text{\textsc{n}}}$%
-labels are induced in an obvious way. Note that the weight of $\partial $
is linear in that of $\widetilde{D}$. \footnote{{\footnotesize Recall that
according to \cite{GH} we can just as well assume that }$\widetilde{D}$%
{\footnotesize \ is horizontally compressed and its weight is polynomial in
that of }$\rho $.}

Now suppose that there is a \emph{fst} $\mathcal{F}$ in a chosen \textsc{NM}$%
_{\rightarrow }^{\ast }$-deduction frame $\widetilde{D}$, and let $\mathcal{F%
}^{\mathrm{\flat }}$ be the image of $\mathcal{F}$ in $\partial $. It is
readily seen that $\mathcal{F}^{\mathrm{\flat }}$ is also a dense and $%
\left( \rightarrow E\right) $ preserving set of closed threads in $\partial $
(see \textsc{NM}$_{\rightarrow }^{\ast }$-clauses 1--3 in Ch. 1.2). That is, 
$\mathcal{F}^{\mathrm{\flat }}$ is a dense set of closed threads in $%
\partial $ such that for every $\Theta \in \mathcal{F}^{\mathrm{\flat }}$
and $\left( \rightarrow E\right) $-conclusion $x\in \Theta $, $\ell ^{\text{%
\textsc{n}}}\left( x\right) =\,$\textsc{e}$_{\rightarrow }$, with premises $%
y $ and $z$, if $y\in \Theta $ then there is a $\Theta ^{\prime }\in 
\mathcal{F}^{\mathrm{\flat }}$ such that $z\in \Theta ^{\prime }$ and $%
\Theta $ coincides with $\Theta ^{\prime }$ below $x$.

\subsection{Modified dag-like provability}

We formalize in \textsc{NM}$_{\rightarrow }^{\mathrm{\flat }}$ the modified
assignment $\mathcal{A}:\partial \ni x\hookrightarrow A\left( x\right)
\subseteq \mathrm{FOR}\left( \partial ^{\flat }\right) $.

\begin{definition}[Assignment $\mathcal{A}$]
Let $\partial ^{\mathrm{\flat }}$\ be any locally correct dag-like \textsc{NM%
}$_{\rightarrow }^{\mathrm{\flat }}$-deduction. Assign nodes $x\in \partial $
with terms $A\left( x\right) $ by descending recursion 1--5.

\begin{enumerate}
\item  $A\left( x\right) :=\left\{ \alpha \right\} $ if $x$ is a leaf and $%
\ell ^{\text{\textsc{f}}}\left( x\right) =$ $\alpha $.

\item  $A\left( x\right) :=A\left( y\right) $ if $\ell ^{\text{\textsc{n}}%
}\left( x\right) =\text{\textsc{r}}_{\rightarrow }$ and $\left\langle
x,y\right\rangle \in \,$\textsc{e}$\left( \partial \right) $.

\item  $A\left( x\right) :=A\left( y\right) \setminus \left\{ \alpha
\right\} $ if $\ell ^{\text{\textsc{n}}}\left( x\right) =\text{\textsc{i}}%
_{\rightarrow }$, $\left\langle x,y\right\rangle \in \,$\textsc{e}$\left(
\partial \right) $ and $\ell ^{\text{\textsc{f}}}\left( x\right) =\alpha
\rightarrow \ell ^{\text{\textsc{f}}}\left( y\right) $.

\item  $A\left( x\right) :=A\left( y\right) \cup A\left( z\right) $ if $\ell
^{\text{\textsc{n}}}\left( x\right) =\text{\textsc{e}}_{\rightarrow }$ and $%
\left\langle x,y\right\rangle ,\left\langle x,z\right\rangle \in \,\text{%
\textsc{e}}\left( \partial \right) $.

\item  $A\left( x\right) :=\circledS \left( A\!\left( y_{1}\!\right) ,\cdots
,\,A\!\left( y_{n}\!\right) \right) $ if $\ell ^{\text{\textsc{n}}}\left(
x\right) \!=\!\text{\textsc{s}}_{\rightarrow }$ \negthinspace and $\left(
\forall i\!\in \!\left[ 1,n\right] \right) \left\langle x,y_{i}\right\rangle
\!\in \,$\textsc{e}$\left( \partial \right) $.
\end{enumerate}
\end{definition}

\begin{definition}[Nondeterministic reduction]
Let $\partial $\ and $\mathcal{A}$\ be as above, $r$\ the root of $\partial $%
, $S$\ a set of formulas occurring in $\partial $. We say that $A\left(
r\right) $\emph{\ reduces} to $S$ (abbr.: $A\left( r\right)
\!\vartriangleright $ $\!S$) if $S$ arises from $A\left( r\right) $ by
successive (in a left-to-right direction) substitutions $A\left( u\right)
=\circledS \left( A\left( v_{1}\right) ,\cdots ,\,A\left( v_{n}\right)
\right) :=A\left( v_{i}\right) $, for a fixed chosen $i\in \left\{ 1,\cdots
,n\right\} $ and for any occurrence $A\left( u\right) $ in $A\left( w\right) 
$ and hence in $A\left( w^{\prime }\right) $, for every $w^{\prime }$ below $%
w$, provided that $u$ is a premise of $w$ such that $\ell ^{\text{\textsc{n}}%
}\left( u\right) =\text{\textsc{s}}_{\rightarrow }$,\footnote{{\footnotesize %
This operation is graphically interpreted by deleting }$u${\footnotesize \
along with }$v_{j}${\footnotesize \ for all }$j\neq i${\footnotesize . } 
{\footnotesize \ }} while using ordinary set-theoretic interpretations of $%
\cup $ and $\setminus $. We call $\partial $ a \emph{modified dag-like proof}
of $\rho =\ell ^{\text{\textsc{f}}}\left( r\right) $ (abbr.: $\partial
\vdash \rho $) if $A\left( r\right) \vartriangleright \emptyset $ holds.
\end{definition}

\begin{example}
Previously shown configuration yields a $\partial $ such that $\partial
\nvdash \rho $ : 
\begin{equation*}
\fbox{$\dfrac{\dfrac{\dfrac{\beta \,;A=\left\{ \beta \right\} }{\alpha
\rightarrow \beta :\text{\textsc{i}}_{\rightarrow }\,;A=\left\{ \beta
\right\} \ }\quad \dfrac{\gamma \,;A=\left\{ \gamma \right\} \quad \quad
\gamma \rightarrow \left( \alpha \rightarrow \beta \right) ;A=\left\{ \gamma
\rightarrow \left( \alpha \rightarrow \beta \right) \right\} }{\alpha
\rightarrow \beta :\text{\textsc{e}}_{\rightarrow }\,;A=\left\{ \gamma
,\gamma \rightarrow \left( \alpha \rightarrow \beta \right) \right\} }}{%
\alpha \rightarrow \beta \,:\text{\textsc{s}}_{\rightarrow };A=\circledS
\left( \left\{ \beta \right\} ,\left\{ \gamma ,\gamma \rightarrow \left(
\alpha \rightarrow \beta \right) \right\} \right) }}{\gamma \rightarrow
\left( \alpha \rightarrow \beta \right) :\text{\textsc{i}}_{\rightarrow
}\,;A=\circledS \left( \left\{ \beta \right\} ,\left\{ \gamma \rightarrow
\left( \alpha \rightarrow \beta \right) \right\} \right) \ }$}
\end{equation*}
where $\ell ^{\text{\textsc{n}}}\left( r\right) =\,$\textsc{i}$_{\rightarrow
}$, $\ell ^{\text{\textsc{f}}}\left( r\right) \!=\!\rho \!=\!\gamma
\rightarrow \left( \alpha \rightarrow \beta \right) $ and $A\left( r\right)
\!=\!\circledS \left( \left\{ \beta \right\} ,\left\{ \gamma \rightarrow
\left( \alpha \rightarrow \beta \right) \right\} \right) $. Note that $%
A\left( r\right) \vartriangleright \left\{ \beta \right\} $ and $A\left(
r\right) \vartriangleright \left\{ \gamma \rightarrow \left( \alpha
\rightarrow \beta \right) \right\} $, although $A\left( r\right)
\ntriangleright \emptyset $.

To obtain an analogous dag-like proof of (say) $\rho ^{\prime }:=\beta
\rightarrow \left( \gamma \rightarrow \left( \alpha \rightarrow \beta
\right) \right) $ we'll upgrade $\partial $ to such $\partial ^{\prime }$\ : 
\begin{equation*}
\fbox{$\dfrac{\dfrac{\dfrac{\dfrac{\beta \,;A=\left\{ \beta \right\} }{%
\alpha \rightarrow \beta :\text{\textsc{i}}_{\rightarrow }\,;A=\left\{ \beta
\right\} \ }\quad \dfrac{\gamma \,;A=\left\{ \gamma \right\} \quad \quad
\gamma \rightarrow \left( \alpha \rightarrow \beta \right) ;A=\left\{ \gamma
\rightarrow \left( \alpha \rightarrow \beta \right) \right\} }{\alpha
\rightarrow \beta :\text{\textsc{e}}_{\rightarrow }\,;A=\left\{ \gamma
,\gamma \rightarrow \left( \alpha \rightarrow \beta \right) \right\} }}{%
\alpha \rightarrow \beta \,:\text{\textsc{s}}_{\rightarrow };A=\circledS
\left( \left\{ \beta \right\} ,\left\{ \gamma ,\gamma \rightarrow \left(
\alpha \rightarrow \beta \right) \right\} \right) }}{\gamma \rightarrow
\left( \alpha \rightarrow \beta \right) :\text{\textsc{i}}_{\rightarrow
}\,;A=\circledS \left( \left\{ \beta \right\} ,\left\{ \gamma \rightarrow
\left( \alpha \rightarrow \beta \right) \right\} \right) \setminus \left\{
\gamma \right\} \ }}{\beta \rightarrow \left( \gamma \rightarrow \left(
\alpha \rightarrow \beta \right) \right) :\text{\textsc{i}}_{\rightarrow
}\,;A=\circledS \left( \left\{ \beta \right\} ,\left\{ \gamma \rightarrow
\left( \alpha \rightarrow \beta \right) \right\} \right) \setminus \left\{
\gamma \right\} \setminus \left\{ \beta \right\} \ }$}
\end{equation*}
and let $\circledS \left( \left\{ \beta \right\} ,\left\{ \gamma ,\gamma
\rightarrow \left( \alpha \rightarrow \beta \right) \right\} \right)
:=\left\{ \beta \right\} $. Then $A\left( r\right) \vartriangleright
\emptyset $ , i.e. $\partial ^{\prime }\vdash $ $\rho _{1}$ holds.
\end{example}

\begin{lemma}
Every modified dag-like proof of $\rho $ is convertible to a dag-like 
\textsc{NM}$_{\rightarrow }$-proof of $\rho $.
\end{lemma}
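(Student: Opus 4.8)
The plan is to realize the nondeterministic reduction $A\left( r\right) \vartriangleright \emptyset $ as a concrete bottom-up \emph{thinning} of the redag $\partial $, exactly as already sketched after the Claim in the Introduction. A witness for $A\left( r\right) \vartriangleright \emptyset $ amounts to a choice function assigning to every $\text{\textsc{s}}_{\rightarrow }$-node $u$ of $\partial $ (more precisely, to every such $u$ whose subterm $A\left( u\right) $ still occurs in the course of the left-to-right evaluation) a branch index $i\left( u\right) $, together with the resulting family of substitutions $\circledS \left( A\left( v_{1}\right) ,\cdots ,A\left( v_{n}\right) \right) :=A\left( v_{i\left( u\right) }\right) $ that collapse all $\circledS $-operators in $A\left( r\right) $ to $\emptyset $. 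I would fix such a witness and then transform $\partial $ accordingly.

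First I would carry out, for each $\text{\textsc{s}}_{\rightarrow }$-node $u$ with chosen premise $v_{i\left( u\right) }$, the local surgery: delete every edge $\left\langle u,v_{j}\right\rangle $ with $j\neq i\left( u\right) $, keep the single edge $\left\langle u,v_{i\left( u\right) }\right\rangle $, and relabel $\ell ^{\text{\textsc{n}}}\left( u\right) :=\text{\textsc{r}}_{\rightarrow }$. This is legitimate since local correctness clause 2(d) gives $\ell ^{\text{\textsc{f}}}\left( v_{i\left( u\right) }\right) =\ell ^{\text{\textsc{f}}}\left( u\right) $, so the new configuration at $u$ is a bona fide $\left( \rightarrow R\right) $ instance; using $\left( \rightarrow R\right) $ rather than contracting $u$ away is what keeps all height numbers intact, since an $\left( \rightarrow R\right) $-premise sits at the same level $h\left( u\right) +1$ as the old $\left( \rightarrow S\right) $-premises did. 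After processing all $\text{\textsc{s}}_{\rightarrow }$-nodes, I would garbage-collect in one pass: discard every node not reachable from $r$ along upward edges, together with its incident edges, obtaining a redag $\partial _{0}$.

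Next I would verify that $\partial _{0}$ is a locally correct \textsc{NM}$_{\rightarrow }^{\mathrm{\flat }}$-deduction in which $\text{\textsc{s}}_{\rightarrow }$ no longer occurs, hence $\partial _{0}\in \,$\textsc{NM}$_{\rightarrow }$, with root $r$ and root-formula $\rho $. Regularity is inherited: the surgery only affects $\text{\textsc{s}}_{\rightarrow }$-nodes, which end up with a single premise at the correct height, and garbage collection removes only whole unreachable nodes, so if a surviving node $x$ has an edge $\left\langle x,y\right\rangle $ then $y$ is reachable via $x$ and survives — consequently $\left( \rightarrow I\right) $- and $\left( \rightarrow E\right) $-nodes retain all their premises and clauses 2(a)--(c) are untouched, while 2(a) holds at each relabelled $u$ because $u$ now has exactly one premise, of the same formula.

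Finally I would track the assignment $\mathcal{A}$: by descending recursion, $A^{\partial _{0}}$ computes precisely the term obtained from $A^{\partial }\left( r\right) $ by performing the chosen substitutions, because relabelling $u$ to $\text{\textsc{r}}_{\rightarrow }$ with sole premise $v_{i\left( u\right) }$ replaces clause 5 by clause 2, giving $A^{\partial _{0}}\left( u\right) =A^{\partial _{0}}\left( v_{i\left( u\right) }\right) $, which is exactly the substitution $\circledS \left( \cdots \right) :=A\left( v_{i\left( u\right) }\right) $. Hence $A^{\partial _{0}}\left( r\right) =\emptyset $, and by the earlier Lemma equating $A\left( r\right) =\emptyset $ with provability in \textsc{NM}$_{\rightarrow }$ we get $\partial _{0}\vdash \rho $, i.e. $\partial _{0}$ is the required dag-like \textsc{NM}$_{\rightarrow }$-proof of $\rho $. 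I expect the only delicate point to be the bookkeeping forced by the dag structure — shared $\text{\textsc{s}}_{\rightarrow }$-nodes with several parents, and the interplay between the ``left-to-right'' ordering of the substitutions and the deletion of disconnected cones — which is why the surgery is organized so that each $\text{\textsc{s}}_{\rightarrow }$-node receives one global branch choice and all clean-up is deferred to a single final reachability pass; the local-correctness and height verifications are then routine.
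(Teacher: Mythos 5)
Your proposal is correct and takes essentially the same route as the paper's own proof: the paper likewise converts $\partial $ by ascending recursion, replacing each $\left( \rightarrow S\right) $-conclusion by the premise ``guessed'' by the reduction witness and explicitly offering the alternative of turning $\left( \rightarrow S\right) $ into the repetition $\left( \rightarrow R\right) $, which is exactly your surgery, and then notes that local correctness, the assignment $\mathcal{A}$, and the passage from $A\left( r\right) \vartriangleright \emptyset $ to $A\left( r\right) =\emptyset $ carry over to the resulting $\left( \rightarrow S\right) $-free deduction $\partial _{0}\in \,$\textsc{NM}$_{\rightarrow }$. Your write-up merely spells out details the paper leaves implicit (the one global branch choice per \textsc{s}$_{\rightarrow }$-node, the final reachability/garbage-collection pass, the height bookkeeping, and the appeal to the earlier lemma equating $A\left( r\right) =\emptyset $ with provability).
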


\begin{proof}
Let $\partial $ be a given \textsc{NM}$_{\rightarrow }^{\mathrm{\flat }}$%
-proof of $\rho $. Its \textsc{NM}$_{\rightarrow }$-conversion is defined by
a simple ascending recursion, as follows. Each time we arrive at a $w$ whose
premise $u$ is a conclusion of $\left( \rightarrow S\right) $, we replace
this $u$\ by its chosen premise that is ``guessed'' by a given
nondeterministic reduction leading to $A\left( r\right) \vartriangleright
\emptyset $ -- alternatively, we replace this $\left( \rightarrow S\right) $
by the corresponding repetition $\left( \rightarrow R\right) $. It is
readily seen that the resulting dag-like deduction $\partial _{0}$ with the
same root-formula $\rho $\ is locally correct and $\left( \rightarrow
S\right) $-free, and hence it belongs to \textsc{NM}$_{\rightarrow }$.
Obviously this conversion preserves a given assignment $x\hookrightarrow
A\left( x\right) $. Also note that $A\left( r\right) \vartriangleright
\emptyset $ in $\partial $ infers $A\left( r\right) =\emptyset $ in $%
\partial _{0}$, and hence $\partial _{0}$ proves $\rho $ in \textsc{NM}$%
_{\rightarrow }$.
\end{proof}

\begin{lemma}
Let $\widetilde{D}$ be any locally correct deduction frame in \textsc{NM}$%
_{\rightarrow }^{\ast }$ with root-formula $\rho $ that admits some fst.
There exists a dag-like \textsc{NM}$_{\rightarrow }$-proof of $\rho $\ whose
weight does not exceed that of $\widetilde{D}$.
\end{lemma}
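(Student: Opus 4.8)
The plan is to chain together the constructions already available, turning the abstract \emph{fst} into concrete nondeterministic $\circledS$-choices and then quoting the conversion Lemma, all the while tracking weights.

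\textbf{Step 1 (embed the frame).} Apply the embedding of \textsc{NM}$_{\rightarrow }^{\ast }$ into \textsc{NM}$_{\rightarrow }^{\flat }$ described above to the given locally correct frame $\widetilde{D}$ and its admitted \emph{fst} $\mathcal{F}$. This yields a locally correct dag-like \textsc{NM}$_{\rightarrow }^{\flat }$-deduction $\partial $ with root $r$ and $\ell ^{\text{\textsc{f}}}\left( r\right) =\rho $, of weight linear in $\left| \widetilde{D}\right| $, together with the image $\mathcal{F}^{\flat }$ of $\mathcal{F}$, which — as already observed — is a dense, $\left( \rightarrow E\right) $-preserving set of closed threads in $\partial $.

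\textbf{Step 2 (turn the \emph{fst} into a modified dag-like proof).} The heart of the argument is to show $A\left( r\right) \vartriangleright \emptyset $ in $\partial $, i.e. that $\partial $ is a modified dag-like proof of $\rho $. For each $\left( \rightarrow S\right) $-node $u$ with premises $y_{1},\dots ,y_{n}$, density of $\mathcal{F}^{\flat }$ supplies a thread through $u$, hence a premise $y_{i\left( u\right) }$ that it selects; fix such an index $i\left( u\right) $ for every $\left( \rightarrow S\right) $-node. Performing the reduction steps $\circledS \left( \dots \right) :=A\left( y_{i\left( u\right) }\right) $ for all these $u$ deletes each $\left( \rightarrow S\right) $-node $u$ together with the material above the premises $y_{j}$, $j\neq i\left( u\right) $, and re-attaches the parent of $u$ to $y_{i\left( u\right) }$; call the result $\partial _{0}$. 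Since premises of an $\left( \rightarrow S\right) $-node carry the same $\ell ^{\text{\textsc{f}}}$-label as the node and are not themselves $\left( \rightarrow S\right) $-nodes (local correctness), $\partial _{0}$ is a locally correct, $\left( \rightarrow S\right) $-free deduction, hence lies in \textsc{NM}$_{\rightarrow }$, with root-formula $\rho $. It remains to verify $A\left( r\right) =\emptyset $ in $\partial _{0}$, equivalently $A\left( r\right) \vartriangleright \emptyset $ in $\partial $. I would do this by contradiction along the lines of the Lemma characterising $\partial \vdash \rho $ by $A\left( r\right) =\emptyset $: an undischarged assumption-path $P$ from $r$ to an $\alpha $-leaf in $\partial _{0}$ lifts back to a thread $P^{+}$ of $\partial $ that uses only the selected $\left( \rightarrow S\right) $-branches; taking (by density) a thread of $\mathcal{F}^{\flat }$ through that $\alpha $-leaf and repeatedly repairing it via $\left( \rightarrow E\right) $-preservation — which, through the embedding, also licenses switching between the premise-groups that became the branches of an $\left( \rightarrow S\right) $-node — one morphs it into $P^{+}$ while staying inside $\mathcal{F}^{\flat }$, so closedness of the resulting thread produces an $\left( \rightarrow I\right) $ on $P^{+}$, hence on $P$, discharging $\alpha $ — a contradiction.

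\textbf{Step 3 (conclude and count).} Once $A\left( r\right) \vartriangleright \emptyset $ is known, the Lemma that every modified dag-like proof converts to a dag-like \textsc{NM}$_{\rightarrow }$-proof gives a dag-like \textsc{NM}$_{\rightarrow }$-proof of $\rho $, and inspecting that conversion shows it is exactly the $\partial _{0}$ built above. For the weight bound, carry out the conversion in the ``delete $u$'' form rather than by replacing $\left( \rightarrow S\right) $ with $\left( \rightarrow R\right) $: every node discarded for an $\left( \rightarrow S\right) $ vanishes, and at each originally multi-group node of $D$ only one group of premises is kept, so each surviving node of $\partial _{0}$ is charged to a distinct node of $D$ with the same $\ell ^{\text{\textsc{f}}}$-label and each surviving edge to structure already present in $\widetilde{D}$; hence $\left| \partial _{0}\right| \leq \left| D\right| \leq \left| \widetilde{D}\right| $, as required. \textbf{Expected main obstacle.} The fragile point is Step 2: one must choose the indices $i\left( u\right) $ at the $\left( \rightarrow S\right) $-nodes coherently enough that the whole term $A\left( r\right) $ really reduces all the way to $\emptyset $, and in particular that density of $\mathcal{F}^{\flat }$ survives pruning to a single branch per $\left( \rightarrow S\right) $-node so that the thread-morphing terminates inside $\mathcal{F}^{\flat }$. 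All three \emph{fst} conditions must be combined here, with $\left( \rightarrow E\right) $-preservation doing double duty for genuine $\left( \rightarrow E\right) $-branchings and for the group-switchings introduced by the embedding; making this interaction go through while simultaneously keeping the weight bound of Step 3 is the crux.
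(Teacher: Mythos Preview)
Your proposal is correct and follows the same overall arc as the paper --- embed $\widetilde{D}$ into \textsc{NM}$_{\rightarrow }^{\flat}$, use the \emph{fst} to resolve the $\circledS$-choices, verify closedness, and read off the weight bound --- but Step~2 is organised differently from the paper's argument, and it is worth recording the difference.

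The paper does not fix the indices $i(u)$ in advance. Instead it runs an ascending recursion on the $\left(\rightarrow E\right)$-conclusions of $\partial$: at each such $x$ it uses density to pick a thread $\Theta\in\mathcal{F}^{\flat}$ through $x$, then invokes the third \emph{fst} condition to obtain a companion $\Theta'$ through the other premise with $\Theta\!\upharpoonright_{x}=\Theta'\!\upharpoonright_{x}$; the $\circledS$-choices below $x$ are dictated by this common initial segment, and the process continues above $x$. In this way the choices and a witnessing sub-\emph{fst} $\mathcal{F}_{0}^{\flat}$ are built together, so that by construction every thread of the cleansed $\partial_{0}$ lies in $\mathcal{F}_{0}^{\flat}$ and is therefore closed.

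Your route --- choose $i(u)$ arbitrarily (via density) and then argue by contradiction that any surviving open thread $P^{+}$ can be morphed into $\mathcal{F}^{\flat}$ --- also works, and for a reason stronger than you seem to realise. The third \emph{fst} condition is stated for \emph{all} pairs of children of a node in $D$, not only for $\left(\rightarrow E\right)$-premises, so (as you note) it transfers through the embedding to the $\left(\rightarrow S\right)$-branchings as well. Iterating it \emph{bottom-up from the root} --- at level $k$ replace the current thread $\Theta_{k}\in\mathcal{F}^{\flat}$ satisfying $\Theta_{k}\!\upharpoonright_{x_{k}}=P^{+}\!\upharpoonright_{x_{k}}$ by a $\Theta_{k+1}\in\mathcal{F}^{\flat}$ with $\Theta_{k+1}\!\upharpoonright_{x_{k+1}}=P^{+}\!\upharpoonright_{x_{k+1}}$ --- terminates with $\Theta_{h}=P^{+}\in\mathcal{F}^{\flat}$, whence $P^{+}$ is closed. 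In particular every maximal thread of $\partial$ already lies in $\mathcal{F}^{\flat}$, so the specific indices $i(u)$ are irrelevant and your ``expected main obstacle'' dissolves. One small wording issue: you start the morphing ``through that $\alpha$-leaf'', but condition~3 fixes the \emph{lower} segment and alters the upper one, so the morphing must proceed from the root upward; the outcome is unchanged. Your Step~3 weight accounting (collapsing each multi-group node of $D$ to its single selected group-node) is correct and matches the paper's one-line justification.
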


\begin{proof}
Let $\partial $ be the \textsc{NM}$_{\rightarrow }^{\mathrm{\flat }}$%
-deduction of $\rho $ induced by $\widetilde{D}$\ and $\mathcal{F}$ any 
\emph{fst} in $\widetilde{D}$. Furthermore, let $\mathcal{F}^{\mathrm{\flat }%
}$ be the image of $\mathcal{F}$ in $\partial $ (see above). We'll show that 
$\mathcal{F}^{\mathrm{\flat }}$ determines successive left-to-right $%
\circledS $-eliminations $\circledS \left( \!A\left( y_{1}\right) ,\cdots
,\,A\left( y_{n}\right) \!\right) \!\hookrightarrow \!A\left( y_{i}\right) $
inside $A\left( r\right) $ leading to required reduction $A\left( r\right)
\vartriangleright \emptyset $. These eliminations together with a suitable
sub-\emph{fst} $\mathcal{F}_{0}^{\mathrm{\flat }}$ $\subseteq \mathcal{F}^{%
\mathrm{\flat }}$ arise as follows\ by ascending recursion along $\mathcal{F}%
^{\mathrm{\flat }}$. Let $x$ with $\ell ^{\text{\textsc{n}}}\left( x\right) =%
\text{\textsc{e}}_{\rightarrow }$ be a chosen lowest conclusion of $\left(
\rightarrow E\right) $ in $\partial $, if any exists. By the density of $%
\mathcal{F}^{\mathrm{\flat }}$, there exists $\Theta \in \mathcal{F}^{%
\mathrm{\flat }}$ with $x\in \Theta $; we let $\Theta \in \mathcal{F}_{0}^{%
\mathrm{\flat }}$. Let $y$ and $z$\ be the two premises of $x$ and suppose
that $y\in \Theta $. By the third \emph{fst} condition there exists a $%
\Theta ^{\prime }\in \mathcal{F}^{\mathrm{\flat }}$ with $z\in \Theta
^{\prime }$ and $\Theta \!\upharpoonright _{x}=$ $\Theta ^{\prime
}\!\upharpoonright _{x}$; so let $\Theta ^{\prime }\in \mathcal{F}_{0}^{%
\mathrm{\flat }}$ be the corresponding ``upgrade''of $\Theta $. In the case $%
z\in \Theta $ we let $\Theta ^{\prime }:=\Theta $. Note that $\Theta
\!\!\upharpoonright _{x}$ determines substitutions $A\left( u\right)
=\circledS \left( A\left( v_{1}\right) ,\cdots ,\,A\left( v_{n}\right)
\right) :=A\left( v_{i}\right) $ in all parents of $\left( \rightarrow
S\right) $-conclusions $u$ occurring in both $\Theta $ and $\Theta ^{\prime
} $\ below $x$ (cf. Definitions 10, 11), if any exist, and hence also $%
\circledS $-eliminations $A\left( u\right) \hookrightarrow A\left(
v_{i}\right) $ in the corresponding subterms of $A\left( r\right) $. The
same procedure is applied to the nodes occurring in $\Theta $ and $\Theta
^{\prime }$\ above $x$ under the next lowest conclusions of $\left(
\rightarrow E\right) $; this yields new ``upgraded'' threads $\Theta
^{\prime \prime },\Theta ^{\prime \prime \prime },\cdots \in \mathcal{F}%
_{0}^{\mathrm{\flat }}$ and $\circledS $-eliminations in the corresponding
initial fragments of $A\left( r\right) $. We keep doing this recursively
until the list of remaining $\circledS $-occurrences in $\Theta \in \mathcal{%
F}_{0}^{\mathrm{\flat }}$ is empty. The final ``cleansed'' $\circledS $-free
form of $A\left( r\right) $ is represented by a set of formulas that easily
reduces to $\emptyset $ by ordinary set-theoretic interpretation of the
remaining operations $\cup $ and $\setminus $, since every $\Theta \in 
\mathcal{F}_{0}^{\mathrm{\flat }}$ involved is closed. That is, the
correlated ``cleansed'' deduction $\partial _{0}$ is a locally correct
dag-like deduction of $\rho $ in the $\left( \rightarrow S\right) $-free
fragment of \textsc{NM}$_{\rightarrow }^{\mathrm{\flat }}$, and hence it
belongs to \textsc{NM}$_{\rightarrow }$; moreover the set of ascending
threads in $\partial _{0}$ is uniquely determined by the remaining rules $%
\left( \rightarrow R\right) $, $\left( \rightarrow I\right) $, $\left(
\rightarrow E\right) $ (cf. analogous passage in the previous proof). Now by
the definition these ``cleansed'' ascending threads are all included in $%
\mathcal{F}_{0}^{\mathrm{\flat }}$ and hence closed with respect to $\left(
\rightarrow I\right) $. \footnote{{\footnotesize These threads may be
exponential in number, but our nondeterministic algorithm runs on the
polynomial set of nodes. }} This yields a desired reduction $A\left(
r\right) \vartriangleright \emptyset $. Hence $\partial _{0}$ proves $\rho $
in \textsc{NM}$_{\rightarrow }$. Obviously the weight of $\partial _{0}$
does not exceed the weight of $\widetilde{D}$.
\end{proof}

Operation $\partial \hookrightarrow \partial _{0}$ is referred to as \emph{%
horizontal cleansing} (cf. Introduction). Together with Remark 2 and
Corollary 7 these lemmata yield

\begin{corollary}
Any given $\rho $ is valid in minimal logic iff it is provable in \textsc{NM}%
$_{\rightarrow }$ by a dag-like deduction $\partial _{0}$ whose weight is
polynomial in that of $\rho $ and such that $\partial _{0}\vdash \rho $ can
be confirmed by a deterministic TM in $\left| \rho \right| $-polynomial
time. \footnote{{\footnotesize See Appendix for a more exhaustive
presentation.}}
\end{corollary}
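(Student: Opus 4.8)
The corollary is the capstone assembly step, to be read off from Corollary~7, Lemma~14, and Lemma~3 together with Remark~2; the plan is therefore to chase the polynomial weight bound through this chain and append a short complexity check, with no fresh construction required.

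I would begin with the ``if'' direction, which is immediate: if $\rho$ is provable in \textsc{NM}$_{\rightarrow}$ by a dag-like deduction $\partial_{0}$, i.e.\ $\partial_{0}\vdash\rho$, then by Remark~2 the tree-like clause in the definition of validity is inessential, so $\rho$ is valid in minimal logic. Note that on this side neither the polynomiality of $\left|\partial_{0}\right|$ nor the deterministic polytime confirmability of $\partial_{0}\vdash\rho$ is used; those are exactly the extra guarantees that the converse direction will supply.

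For the ``only if'' direction, suppose $\rho$ is valid in minimal logic. By Corollary~7 there is a pair $\left\langle\widetilde{D},\mathcal{F}\right\rangle$ with $\widetilde{D}$ a locally correct \textsc{NM}$_{\rightarrow}^{\ast}$-deduction frame with root-formula $\rho=\ell^{\text{\textsc{f}}}\left(r\right)$, which we may take horizontally compressed and of weight polynomial in $\left|\rho\right|$, and with $\mathcal{F}$ an \emph{fst} in $\widetilde{D}$. Feeding $\widetilde{D}$ into Lemma~14 (it admits the \emph{fst} $\mathcal{F}$) yields a dag-like \textsc{NM}$_{\rightarrow}$-proof $\partial_{0}$ of $\rho$ whose weight does not exceed that of $\widetilde{D}$, hence is polynomial in $\left|\rho\right|$. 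The point I would record carefully is that, as the proof of Lemma~14 shows, $\partial_{0}$ is honestly $\left(\rightarrow S\right)$-free: the cleansing has already committed, along the sub-\emph{fst} $\mathcal{F}_{0}^{\flat}$, to one premise at each $\left(\rightarrow S\right)$-node, so $\partial_{0}$ genuinely lies in \textsc{NM}$_{\rightarrow}$ and carries only the assignment $A$ of clauses~1--4. (One could instead pass through a modified dag-like proof and invoke Lemma~13, but Lemma~14 states the weight bound outright, so it is the more economical route.)

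It remains to verify the complexity clause for this fixed $\partial_{0}$. Since $\partial_{0}$ is a $\left(\rightarrow S\right)$-free dag-like deduction lying in \textsc{NM}$_{\rightarrow}$ and the assertion $\partial_{0}\vdash\rho$ is by definition the equality $A\left(r\right)=\emptyset$ for the set-valued assignment, Lemma~3 applies verbatim --- it is stated for dag-like as well as tree-like deductions --- so $A\left(r\right)\overset{?}{=}\emptyset$ is decidable by a deterministic TM in time polynomial in $\left|\partial_{0}\right|$; composing with $\left|\partial_{0}\right|=\left|\rho\right|^{O(1)}$ yields decidability in time polynomial in $\left|\rho\right|$, as claimed. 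I expect the only delicate matters here to be bookkeeping --- the composition of the two polynomials, and the observation that the assignment $A$ of Lemma~3 is literally the restriction of the assignment $\mathcal{A}$ of Definition~10 to $\left(\rightarrow S\right)$-free deductions --- since the substantive work (the nondeterministic horizontal cleansing trading an arbitrary \emph{fst} for a committed, $\left(\rightarrow S\right)$-free, weight-bounded deduction) is already discharged in Lemma~14. The advertised consequence then follows at once: provability in purely implicational minimal logic is PSPACE-complete, yet the corollary exhibits, for every valid $\rho$, a witness $\partial_{0}$ of weight polynomial in $\left|\rho\right|$ that a deterministic TM checks in polynomial time; guessing $\partial_{0}$ and running that check places validity in NP, so PSPACE $\subseteq$ NP and hence \textbf{NP = PSPACE}.
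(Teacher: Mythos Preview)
Your proposal is correct and follows essentially the same route as the paper: the paper derives the corollary by combining Remark~2, Corollary~7, and Lemmata~13--14, and your argument assembles precisely these ingredients (invoking Lemma~3 for the polytime check, which is where that fact is first stated). Your added paragraph on \textbf{NP = PSPACE} anticipates Theorem~16 rather than belonging to this corollary's proof, but that is harmless.
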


\begin{theorem}
$\mathbf{PSPACE}\subseteq \mathbf{NP}$ and hence $\mathbf{NP=PSPACE}$.
\end{theorem}

\begin{proof}
Minimal propositional logic is PSPACE-complete (cf. e.g. \cite{Joh}, \cite
{Statman}, \cite{Svejdar}). Hence $\mathbf{PSPACE}\subseteq \mathbf{NP}$
directly follows from Corollary 15. Note that in contrast to \cite{GH} here
we use nondeterministic arguments twice. First we ``guess'' the existence of
Hudelmaier-style cutfree sequential deduction of $\rho $\ that leads (by
deterministic compression) to a ``small'' natural deduction frame $%
\widetilde{D}$ that is supposed to have a \emph{fst} $\mathcal{F}$. Having
this we ``guess'' the existence of \ ``cleansed'' modified subdeduction that
confirms in $\left| \rho \right| $-polynomial time the provability of $\rho $
with regard to $\left\langle \widetilde{D},\mathcal{F}\right\rangle $.
\end{proof}

\begin{corollary}
$\mathbf{NP=Co}$\textbf{-}$\mathbf{NP}$ and hence the polynomial hierarchy
collapses to the first level.
\end{corollary}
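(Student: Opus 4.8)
The plan is to derive Corollary 17 entirely from Theorem 16 together with two standard facts from structural complexity, so no further proof-theoretic work is needed. First I would record that $\mathbf{PSPACE}$ is closed under complementation. This is immediate for a deterministic space class: given a deterministic polynomial-space machine deciding a language $L$, one obtains a deterministic polynomial-space machine for $\overline{L}$ by exchanging accepting and rejecting configurations, using the standard clocking remark that a polynomial-space computation can be forced to halt on every input (it has only exponentially many configurations, so loops can be detected within the same space bound). Hence $\mathbf{Co}$-$\mathbf{PSPACE}=\mathbf{PSPACE}$.

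Next I would combine this with Theorem 16. Since $\mathbf{NP}=\mathbf{PSPACE}$, closing both sides under complementation gives $\mathbf{Co}$-$\mathbf{NP}=\mathbf{Co}$-$\mathbf{PSPACE}=\mathbf{PSPACE}=\mathbf{NP}$, which is the first assertion of the corollary.

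For the collapse of the polynomial hierarchy, the quickest route is the classical inclusion $\mathbf{PH}\subseteq\mathbf{PSPACE}$: each level $\Sigma_k^p$ and $\Pi_k^p$ is decidable in polynomial space by recursively evaluating the bounded quantifier prefix, reusing space across the (polynomially many) quantifier alternations. Combining this with Theorem 16, $\mathbf{NP}\subseteq\mathbf{PH}\subseteq\mathbf{PSPACE}=\mathbf{NP}$, whence $\mathbf{PH}=\mathbf{NP}$; that is, the hierarchy collapses to its first level. Equivalently, one may invoke the textbook theorem that $\Sigma_1^p=\Pi_1^p$ forces $\mathbf{PH}=\Sigma_1^p$, and observe that the equality $\mathbf{NP}=\mathbf{Co}$-$\mathbf{NP}$ just established is precisely $\Sigma_1^p=\Pi_1^p$.

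I do not expect any genuine obstacle here: all the difficulty is concentrated in Theorem 16, whose proof occupies the preceding sections, and Corollary 17 is a routine application of the elementary closure of $\mathbf{PSPACE}$ under complement together with the standard embedding $\mathbf{PH}\subseteq\mathbf{PSPACE}$. The only point demanding even minimal care is the clocking/halting remark that makes complementation of $\mathbf{PSPACE}$ machines unproblematic, and that is entirely standard.
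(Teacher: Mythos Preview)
Your proposal is correct and follows essentially the same approach as the paper: the paper's one-line proof reads ``$\mathbf{NP\!=\!PSPACE}$ implies $\mathbf{Co}$-$\mathbf{NP\!=\!Co}$-$\mathbf{PSPACE\!=\!PSPACE\!=\!NP}$ (see also \cite{Papa})'', which is exactly your chain of equalities, with the closure of $\mathbf{PSPACE}$ under complement and the collapse of $\mathbf{PH}$ left implicit by the citation to Papadimitriou. You have simply unpacked those standard facts in more detail than the paper does.
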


\begin{proof}
$\mathbf{NP\!=\!PSPACE}$ implies $\mathbf{Co}$\textbf{-}$\mathbf{NP\!=\!Co}$%
\textbf{-}$\mathbf{PSPACE\!=\!PSPACE\!=\!NP}$ (see also \cite{Papa}).
\end{proof}

\section{Appendix: rough complexity estimate}

\subsection{Dag-like proof system \textsc{NM}$_{\rightarrow }$}

We regard \textsc{NM}$_{\rightarrow }$\ as \textsc{NM}$_{\rightarrow
}^{\flat }$ without separation rule $\left( \rightarrow S\right) $.
Moreover, without loss of generality we suppose that dag-like \textsc{NM}$%
_{\rightarrow }$-deductions $\partial $ of root-formulas $\rho $ have
polynomial total number of vertices $\left| \text{\textsc{v}}\left( \partial
\right) \right| =\mathcal{O}\left( \left| \rho \right| ^{4}\right) $ while
the lengths (weights) of formulas and the height numbers involved are
bounded by $2\left| \rho \right| $ and $\left| \text{\textsc{v}}\left(
\partial \right) \right| $, respectively (cf. \cite{GH}).

Let $\mathrm{LC}\left( \partial \right) $ and $\mathrm{PROV}\left( \partial
\right) $\ be abbreviations for `$\partial $\emph{\ is locally correct}' and
`$\partial $ \emph{proves} $\rho $', respectively, and let $\mathrm{PROOF}%
\left( \partial \right) :=\mathrm{LC}\left( \partial \right) \&\,\mathrm{PROV%
}\left( \partial \right) $. We wish to validate the assertion $\mathrm{PROOF}%
\left( \partial \right) $ in polynomial time (and space) by a suitable
deterministic TM $M$. For technical reasons we choose a formalization of $%
\partial $ in which edges are redefined as pairs $\left\langle
parent,child\right\rangle $.

\textbf{Input:} $a:=2\left| \rho \right| $, $b:=\left| \text{\textsc{v}}%
\left( \partial \right) \right| $ and the list $t$ consisting of tuples $%
t\left( x\right) =\left[ x,y_{1},y_{2},h,h_{1},h_{2},\chi ,\gamma ,\beta
_{1},\beta _{2}\right] $, for every $x\leq b$, where $\chi \in \left\{ \text{%
\textsc{r}}_{\rightarrow },\text{\textsc{i}}_{\rightarrow },\text{\textsc{e}}%
_{\rightarrow },\text{\textsc{l}}\right\} $ and $h,h_{1},h_{2}\leq b$ are
natural numbers (the heights, in binary) while $x,y_{1},y_{2}\leq b$ and $%
\gamma ,\beta _{1},\beta _{2}\leq a$ are natural numbers (in binary) which
encode nodes and formulas, respectively, occurring in $\partial $. Thus the
total length (weight) of $t$ is $\mathcal{O}\left( \left| \rho \right|
^{4}\log \left| \rho \right| \right) <\mathcal{O}\left( \left| \rho \right|
^{5}\right) $. $\mathrm{LC}\left( \partial \right) $ and $\mathrm{PROV}%
\left( \partial \right) $ are formalized as follows.

\subsection{Local correctness}

We observe that $\mathrm{LC}\left( \partial \right) $ is equivalent to
conjunction of the following conditions 1--8 on $t\left( x\right) $\ under
the assumptions: `$x$ \emph{is parent of} $y$', $h:=h\left( x\right) $, $%
h_{1}:=h\left( y_{1}\right) $, $h_{2}:=h\left( y_{2}\right) $, $\chi :=\ell
^{\text{\textsc{n}}}\left( x\right) $, $\gamma :=\ell ^{\text{\textsc{f}}%
}\left( x\right) $, $\beta _{1}:=\ell ^{\text{\textsc{f}}}\left(
y_{1}\right) $ and $\beta _{2}:=\ell ^{\text{\textsc{f}}}\left( y_{2}\right) 
$.

\begin{enumerate}
\item  If $x=x^{\prime }$\ then $t\left( x\right) =t\left( x^{\prime
}\right) $.

\item  If $t\left( x\right) =\left[ x,y_{1},y_{2},h,h_{1},h_{2},\chi ,\gamma
,\beta _{1},\beta _{2}\right] $ and $x^{\prime }=y_{i}$ ($i\in \left\{
1,2\right\} $) with $t\left( x^{\prime }\right) =\left[ x^{\prime
},y_{1}^{\prime },y_{2}^{\prime },h^{\prime },h_{1}^{\prime },h_{2}^{\prime
},\chi ^{\prime },\gamma ^{\prime },\beta _{1}^{\prime },\beta _{2}^{\prime }%
\right] $, then $h^{\prime }=h_{i}$ and $\gamma ^{\prime }=\beta _{i}$.

\item  If $x=r$ then $h=0$ and $\chi \neq $\thinspace \textsc{l.}

\item  If $\chi =\,$\textsc{l} then $y_{1}=y_{2}=h_{1}=h_{2}=\beta
_{1}=\beta _{2}=0$ [: case $x\in \,$\textsc{l}$\left( \partial \right) $].

\item  If $\chi \neq $\thinspace \textsc{l} then\textsc{\ }$h_{1}=h_{2}=h+1$%
\textsc{.}

\item  If $\chi =\,$\textsc{r}$_{\rightarrow }$ then $y_{2}=\beta _{2}=0$
and $\gamma =\beta _{1}$\textsc{\ }[: rule $\left( \rightarrow R\right) $].

\item  If $\chi =\,$\textsc{i}$_{\rightarrow }$ then $y_{2}=\beta _{2}=0$
and $\gamma =\alpha \rightarrow \beta _{1}$ for some $\alpha $ [: rule $%
\left( \rightarrow I\right) $].

\item  If $\chi =\,$\textsc{e}$_{\rightarrow }$ then $\beta _{2}=\beta
_{1}\rightarrow \gamma $ [: rule $\left( \rightarrow E\right) $].
\end{enumerate}

We assume that $\mathrm{LC}\left( \partial \right) $ is validated by a TM $M$
in polynomial\ time (and space). The verification of conditions 1--8
requires $\mathcal{O}\left( \left| \rho \right| ^{5}\right) $ iterations of
basic queries $\chi \overset{?}{=}\chi ^{\prime }$, $u\overset{?}{=}v$, $%
\delta \overset{?}{=}\sigma $, $\left( \exists ?\alpha \right) \gamma
=\alpha \rightarrow \beta $ for $\chi ,\chi ^{\prime }\in \left\{ \text{%
\textsc{r}}_{\rightarrow },\text{\textsc{i}}_{\rightarrow },\text{\textsc{e}}%
_{\rightarrow },\text{\textsc{l}}\right\} $, $u,v\leq b$ and $\beta ,\gamma
,\delta ,\sigma \leq a$ that are solvable in $\mathcal{O}\left( \left| \rho
\right| \right) $ time (note that $\alpha \rightarrow \beta =\,\rightarrow
\!\alpha \beta $ in the \L ukasiewicz prefix notation). Summing up there is
a deterministic TM $M$ that verifies $\mathrm{LC}\left( \partial \right) $
in $\mathcal{O}\left( \left| \rho \right| ^{5}\cdot \left| \rho \right|
\right) =\mathcal{O}\left( \left| \rho \right| ^{6}\right) $ time and $%
\mathcal{O}\left( \left| \rho \right| ^{5}\right) $ space.

\subsection{Assignment $\mathcal{A}$}

A given locally correct \textsc{NM}$_{\rightarrow }$-deduction $\partial $
determines an assignment 
\begin{equation*}
\mathcal{A}:\text{\textsc{v}}\left( \partial \right) \ni x\hookrightarrow
A\left( x\right) \subseteq \mathrm{FOR}\left( \partial \right)
\end{equation*}
that is defined by the following recursive clauses 1--4.

\begin{enumerate}
\item  $A\left( x\right) :=\left\{ \alpha \right\} $ if $x\in \,$\textsc{l}$%
\left( \partial \right) $ and $\ell ^{\text{\textsc{f}}}\left( x\right) =$ $%
\alpha $,

\item  $A\left( x\right) :=A\left( y\right) $ if $\ell ^{\text{\textsc{n}}%
}\left( x\right) =\,$\textsc{r}$_{\rightarrow }$ and $\left\langle
x,y\right\rangle \in \,$\textsc{e}$\left( \partial \right) $.

\item  $A\left( x\right) :=A\left( y\right) \setminus \left\{ \alpha
\right\} $ if $\ell ^{\text{\textsc{n}}}\left( x\right) =\text{\textsc{i}}%
_{\rightarrow }$, $\ell ^{\text{\textsc{f}}}\left( x\right) =\alpha
\rightarrow \ell ^{\text{\textsc{f}}}\left( y\right) $ and $\left\langle
x,y\right\rangle \in \,$\textsc{e}$\left( \partial \right) $.

\item  $A\left( x\right) :=A\left( y\right) \cup A\left( z\right) $ if $\ell
^{\text{\textsc{n}}}\left( x\right) =\text{\textsc{e}}_{\rightarrow }$ and $%
\left\langle x,y\right\rangle ,\left\langle x,z\right\rangle \in \,$\textsc{e%
}$\left( \partial \right) $.
\end{enumerate}

$\mathcal{A}$\ is defined by recursion of the length $\left| \text{\textsc{v}%
}\left( \partial \right) \right| =\mathcal{O}\left( \left| \rho \right|
^{4}\right) $. Recursion steps produce (say, sorted) lists of formulas $%
A\left( x\right) \subseteq \left\{ \ell ^{\text{\textsc{f}}}\left( y\right)
:y\in \text{\textsc{l}}\left( \partial \right) \right\} $, $x\in $\textsc{%
\thinspace v}$\left( \partial \right) $, $\left| A\left( x\right) \right|
\leq $ $\left| \text{\textsc{v}}\left( \partial \right) \right| $ using
set-theoretic unions $A\cup B$ and subtractions $A\setminus \left\{ \alpha
\right\} $. Each recursion step requires $\mathcal{O}\left( \left| \rho
\right| \cdot \left| \text{\textsc{v}}\left( \partial \right) \right|
\right) =\mathcal{O}\left( \left| \rho \right| ^{5}\right) $ steps of
computation. This yields upper bound\ $\mathcal{O}\left( \left| \rho \right|
^{4}\cdot \left| \rho \right| ^{5}\right) =\mathcal{O}\left( \left| \rho
\right| ^{9}\right) $ for $A\left( r\right) \overset{?}{=}\emptyset $. Thus $%
\mathrm{PROV}\left( \partial \right) $ is\ verifiable by a suitable
deterministic TM $M$\ in $\mathcal{O}\left( \left| \rho \right| ^{9}\right) $
time and $\mathcal{O}\left( \left| \rho \right| \right) $ space.\ Hence by
the above estimate of $\mathrm{LC}\left( \partial \right) $\ we can safely
assume that $\mathrm{PROOF}\left( \partial \right) $ is verifiable by $M$ in 
$\mathcal{O}\left( \left| \rho \right| ^{9}\right) $ time and $\mathcal{O}%
\left( \left| \rho \right| ^{5}\right) $ space.

\begin{conclusion}
There exist polynomials $p,q,r$ of degrees $5,9,5$, respectively, and a
deterministic boolean-valued TM $M$ such that for any purely implicational
formula $\rho $ the following holds: $\rho $\emph{\ is valid in minimal
logic iff there exists a }$u\in \left\{ 0,1\right\} ^{p\left( \left| \rho
\right| \right) }$\emph{\ such that }$M\left( \rho ,u\right) $\emph{\ yields 
}$1$\emph{\ after }$q\left( \left| \rho \right| +\left| u\right| \right) $%
\emph{\ steps of computation in space }$r\left( \left| \rho \right| +\left|
u\right| \right) $. Analogous polynomial estimates of the intuitionistic
and/or classical propositional and even quantified boolean validity are
easily obtained by familiar syntactic interpretations within minimal logic
(cf. e.g. \cite{Ish}, \cite{PraMa}, \cite{Svejdar}).
\end{conclusion}

\begin{remark}
Recall that $\mathrm{PROV}\left( \partial \right) \!$ is equivalent to
the\negthinspace\ assertion\negthinspace\ that\negthinspace\ maximal threads
in $\partial $ are closed. This in turn is equivalent to a variant of
non-reachability assertion: `$r$\emph{\ is not connected to any leaf }$z$%
\emph{\ in a subgraph of }$\partial $\emph{\ that is obtained by deleting
all edges }$\left\langle x,y\right\rangle $\emph{\ with }$\ell ^{\text{%
\textsc{n}}}\left( x\right) =\,$\textsc{i}$_{\rightarrow }$\emph{\ and }$%
\ell ^{\text{\textsc{f}}}\left( x\right) =\ell ^{\text{\textsc{f}}}\left(
z\right) \rightarrow \ell ^{\text{\textsc{f}}}\left( y\right) $', which
we'll abbreviate by $\mathrm{PROV}_{1}\!\left( \partial \right) $. Now $%
\mathrm{PROV}_{1}\!\left( \partial \right) $\ is verifiable by a
deterministic TM in $\mathcal{O}\left( \left| \text{\textsc{v}}\left(
\partial \right) \right| \cdot \left| \text{\textsc{e}}\left( \partial
\right) \right| \right) =\mathcal{O}\left( \left| \rho \right| ^{12}\right) $
time and $\mathcal{O}\left( \left| \rho \right| \cdot \left| \text{\textsc{v}%
}\left( \partial \right) \right| \right) =\mathcal{O}\left( \left| \rho
\right| ^{5}\right) $ space (cf. e.g. \cite{Papa}). However this does not
improve our upper bound for $\mathrm{PROOF}\left( \partial \right) $.
Actually there are known much better estimates of the reachability problem
(cf. e.g. \cite{Thor}, \cite{Holm}), but at this stage we are not interested
in more precise analysis.
\end{remark}

-----------------------------------------------------------------------------------------------

\end{document}